\documentclass[] {article}

\pdfoutput=1

\usepackage{authblk}
\usepackage{booktabs}
\usepackage{subcaption}
\usepackage{float}
\usepackage{amsmath}
\usepackage{amssymb}
\usepackage{amsthm}
\usepackage{dsfont}
\usepackage{paralist}
\usepackage[multiple]{footmisc}
\usepackage[ruled,vlined]{algorithm2e}
\usepackage{enumerate}
\usepackage{graphicx}
\usepackage[outdir=./]{epstopdf}
\usepackage{caption}
\usepackage{subcaption}
\usepackage{color}

\DeclareMathOperator*{\cov}{Cov} 
\DeclareMathOperator*{\var}{Var}	
\DeclareMathOperator*{\bias}{Bias}	
\DeclareMathOperator*{\mse}{MSE}	
\DeclareMathOperator*{\uniform}{Uniform}	
\DeclareMathOperator*{\FPPest}{\hat{f}_{FPP}} 

\newtheorem{theorem}{Theorem}
\newtheorem*{theorem*}{Theorem}



\newcommand{\id}{d_{i}} 
\newcommand{\od}{d_{o}} 
\newcommand{\avgdegree}{\bar{d}} 
\newcommand{\PCCodf}{\rho_{d_{o},f}} 
\newcommand{\SDf}{\sigma_{f}} 
\newcommand{\SDod}{\sigma_{d_{o}}} 
\newcommand{\Bglobal}{B_{global}} 
\newcommand{\Blocal}{B_{local}} 
\newcommand{\Blocalof}[1]{B_{local_{#1}}} 


\title{Friendship Paradox Skews Perceptions of Popularity in \\ Directed Networks}
\title{Friendship Paradox Distorts Observations in \\ Directed Networks}
\title{Friendship Paradox Biases Perceptions in Directed Networks}
\author[1]{Nazanin Alipourfard$^*$}
\author[2]{Buddhika Nettasinghe$^*$}
\author[1]{Andr\'es Abeliuk}
\author[2]{Vikram Krishnamurthy}
\author[1]{Kristina Lerman}
\affil[1]{Information Sciences Institute}
\affil[2]{Cornell Tech}

\begin{document}
\makeatletter
\def\blfootnote{\gdef\@thefnmark{}\@footnotetext}
\makeatother

\blfootnote{* N. Alipourfard and B. Nettasinghe contributed equally to this work.}
\maketitle

\begin{abstract}
How popular a topic or an opinion \emph{appears} to be in a network can be very different from its \emph{actual} popularity.  For example, in an online network of a social media platform, the number of people who mention a topic in their posts---i.e., its global popularity---can be dramatically different from how people see it in their social feeds---i.e., its perceived popularity---where the feeds aggregate their friends' posts. We trace the origin of this discrepancy to the friendship paradox in directed networks, which states that people are less popular than their friends (or followers) are, on average. We identify conditions on network structure that give rise to this perception bias, and validate the findings empirically using data from Twitter. Within messages posted by Twitter users in our sample, we identify topics that appear more frequently within the users' social feeds, than they do globally, i.e., among all posts. In addition, we present a polling algorithm that leverages the friendship paradox to obtain a statistically efficient estimate of a topic's global prevalence from biased perceptions of individuals. We characterize the bias of the polling estimate, provide an upper bound for its variance, and validate the algorithm's efficiency through synthetic polling experiments on our Twitter data. Our paper elucidates the non-intuitive ways in which the structure of directed networks can distort social perceptions and resulting behaviors.
\end{abstract}

\section{Introduction}

We observe our peers to learn social norms, assess risk, or copy behaviors. However, these observations can be systematically biased~\cite{Miller1994collective,Baer91,Prentice1993pluralistic,Kitts2003egocentric,berkowitz2005overview}, distorting how we see the world. One of the better known sources of bias is the friendship paradox in social networks~\cite{Feld91}, which states that people are less popular than their friends are, on average. 
Consequences of friendship paradox can 
skew how we compare ourselves to friends:
people tend to be less happy than their friends are~\cite{Bollen11}, and researchers tend to have less impact than their co-authors do, on average~\cite{Benevenuto2016}. In fact, any trait that is correlated with popularity is likely to be misperceived~\cite{Lerman2016majority,Eom14}. This may explain why adolescents systematically overestimate how much their peers drink or engage in risky behaviors~\cite{Baer91,berkowitz2005overview} and why social media use is often associated with negative social comparisons~\cite{abel2016social}.

In contrast to friendship networks, many online social networks are directed. On Twitter, for example, we subscribe to, or follow, others to see their posts, but the information does not flow in the opposite direction, unless those people also follow us back. For convenience, we refer to people whose posts we see our \emph{friends}, and those who see our posts our \emph{followers}. Note that this nomenclature does not imply a bidirectional friendship relationship. An individual's in-degree is the number of his or her friends, and the out-degree is the number of followers. The asymmetric nature of links in directed networks leads to four variants of the friendship paradox~\cite{Hodas13icwsm}: your friends (or followers) have more friends (or followers) than you do, on average. Empirically, this effect can be quite large, with upwards of 90\% of social media users observing that they have a lower in-degree and out-degree than both their friends and followers~\cite{Kooti14icwsm}. However, the conditions under which these four variants of the paradox exist have not been comprehensively analyzed. We carry out the analysis to show that while two variants of the friendship paradox occur in any directed network~\cite{higham2018centrality}, the remaining two exist only if an individual's in-degree and out-degree are correlated.

Friendship paradox can alter individual's observations of the network's state.
We consider directed networks where nodes have a trait, such as gender, political affiliation, or whether they used a certain hashtag in their posts. The trait's \emph{global prevalence} is simply the fraction of all nodes with that trait. On the other hand, its \emph{observed} prevalence is the 
fraction of friends of any node that have the trait.
In networks where the more influential (higher out-degree) nodes are likely to have the trait, its observed prevalence will be substantially higher than its actual prevalence. Our analysis shows that, similar to the generalized friendship paradox in undirected networks~\cite{Eom14_2,Eom14}, correlation between nodes' trait and their out-degree amplifies this perception bias.

In reality, an individual's perception of a trait is shaped by its \emph{local prevalence} among his or her friends. 
We identify a new paradox in directed networks, as a result of which
a trait will appear significantly more prevalent {locally} among individual's friends, than it is {globally} among all people. We show that this effect is stronger in networks where 
higher out-degree nodes are connected to nodes with a lower in-degree.

Surprisingly, although individual observations are biased, we can still make efficient estimates of the global prevalence of a trait.
We present a polling algorithm that obtains a statistically efficient estimate of a trait's global prevalence, with a smaller 
error than alternative polling methods. Proposed method 
leverages the friendship paradox 
to reduce the error of the polling estimate by trading off the bias of the estimate and its variance.
We analytically characterize this tradeoff and provide an upper bound for the variance.

We demonstrate that perception bias can be large in a real-world network.
To this end, we extracted a subgraph of the directed Twitter social network  and  collected messages posted by users within this subgraph. 
Treating the occurrence of particular hashtags within messages as traits or topics enables us to measure the perception bias. We identify hashtags that appear 
much more frequently within users' social feeds than they do among all messages posted by everyone, leading users to overestimate their prevalence.
We also validate the performance of the proposed polling algorithm through synthetic polling experiments on our Twitter subgraph.

Our paper elucidates some of the non-intuitive ways that directed social networks can bias individual perceptions. Since collective phenomena in networks, such as social contagion and adoption of social norms, are driven by individual perceptions, the structure of networks and the paradoxes endemic in them can impact social dynamics in unexpected way. This work shows how we can begin to quantify and mitigate these biases.

\section{Results}
\label{sec:results}

Consider a directed network $G = (V,E)$, with $\{V\}$ nodes and $\{E\}$ links. A link $(i,j)$ pointing from $i$ to $j$ indicates that $i$ is a \textit{friend} of $j$ or equivalently, $j$ \textit{follows} $i$. Here, the direction of the link indicates the flow of information. The out-degree of a node $v$, $\od(v)$, measures the number of followers it has, and its in-degree, $\id(v)$, the number of friends.

We define three random variables, $X$, $Y$ and $Z$, that correspond to different node sampling methods. A node $v$ with an out-degree $\od(v)$ has that many followers, or equivalently, $v$ is a friend to $\od(v)$ number of nodes. Therefore, a node $Y$ that is obtained from $V$ by sampling proportional to out-degree of nodes is called a \emph{random friend}. Similarly, a node $v$ that has $\id(v)$  links pointing to it is a follower of $\id(v)$ other nodes. Therefore, a node $Z$ that is obtained from $V$ by sampling proportional to in-degree of nodes is called a \emph{random follower}. Below, we formalize these terms.
	
	\begin{itemize}
		\item[\bf Random node $X$] is a uniformly sampled node from $V$:
		\begin{equation}
				\label{eq:distribution_X}
			\mathbb{P}(X = v) = \frac{1}{N} \quad \forall v\in V.
		\end{equation}

		\item [\bf Random friend $Y$] is a node sampled from $V$ proportional to its out-degree:
		\begin{equation}
		\label{eq:distribution_Y}
			\mathbb{P}(Y = v) = \frac{\od(v)}{\sum_{v'\in V}\od (v')}, \quad \forall v\in V.
		\end{equation}

		\item [\bf Random follower $Z$] is a node sampled from $V$ proportional to its in-degree:
		\begin{equation}
				\label{eq:distribution_Z}
			\mathbb{P}(Z = v) = \frac{\id(v)}{\sum_{v'\in V}\id (v')}, \quad \forall v\in V.
		\end{equation}
	\end{itemize}

\noindent For any directed network, the average in-degree $\mathbb{E}\{\id(X)\} = \frac{\sum_{v\in V} \id(v)}{N}$ and the average out-degree $\mathbb{E}\{\od(X)\} = \frac{\sum_{v\in V} \od(v)}{N}$ are the same.
Therefore, we use $\avgdegree$ to denote both average in-degree  and average out-degree  of a random node $X$: $\avgdegree=\mathbb{E}\{\od(X)\}=\mathbb{E}\{\id(X)\}$.

\subsection{Four Variants of the Friendship Paradox in Directed Networks}
	\label{subsec:four_versions_of_DFP}

Four different variants of the friendship paradox exist in directed networks~\cite{Hodas13icwsm}. The first two (Theorem \ref{th:friendship_paradox_any_network}) state that  (1) random friends have more followers than random nodes do, and (2) random followers have more friends than random nodes do (on average). The magnitudes of these are set by the heterogeneity (measured by the variance) of the in- and out-degree distributions of the underlying network. Theorems \ref{th:friendship_paradox_any_network} and \ref{th:friendship_paradox_positivelyCorrelated_inout_degree_network} were independently proved recently in \cite{higham2018centrality} utilizing vector norms. All omitted proofs are in the Appendix. 
	
	\begin{theorem}
		\label{th:friendship_paradox_any_network}
		Let $G = (V,E)$ be a directed network. Then,
		\begin{compactenum}
			\item random friend $Y$ has more followers than a random node $X$, on average; i.e.,
			\begin{equation}
\label{eq:dfpout}
				\mathbb{E}\{\od(Y)\} - \avgdegree = \frac{\var\{\od(X)\}}{\avgdegree} \geq 0.
			\end{equation}
			
			\item random follower $Z$ has more friends than a random node $X$, on average; i.e.,
			\begin{equation}
\label{eq:dfpin}
				\mathbb{E}\{\id(Z)\} - \avgdegree = \frac{\var\{\id(X)\}}{\avgdegree} \geq 0.
			\end{equation}
		\end{compactenum}
	\end{theorem}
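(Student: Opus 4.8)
The plan is to compute $\mathbb{E}\{\od(Y)\}$ directly from the size-biased sampling law of $Y$ in \eqref{eq:distribution_Y} and recognize it as a ratio of the second moment to the first moment of the out-degree. Concretely, I would write
\begin{equation}
\mathbb{E}\{\od(Y)\} = \sum_{v \in V} \od(v)\, \mathbb{P}(Y = v) = \sum_{v \in V} \od(v)\cdot \frac{\od(v)}{\sum_{v' \in V}\od(v')} = \frac{\sum_{v \in V}\od(v)^2}{\sum_{v' \in V}\od(v')},
\end{equation}
and then divide numerator and denominator by $N$. Using the uniform law of $X$ from \eqref{eq:distribution_X}, the numerator becomes $\mathbb{E}\{\od(X)^2\}$ and the denominator becomes $\mathbb{E}\{\od(X)\} = \avgdegree$, so $\mathbb{E}\{\od(Y)\} = \mathbb{E}\{\od(X)^2\}/\avgdegree$.

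Next I would subtract $\avgdegree$ and invoke the elementary identity $\var\{\od(X)\} = \mathbb{E}\{\od(X)^2\} - \avgdegree^2$, which gives
\begin{equation}
\mathbb{E}\{\od(Y)\} - \avgdegree = \frac{\mathbb{E}\{\od(X)^2\} - \avgdegree^2}{\avgdegree} = \frac{\var\{\od(X)\}}{\avgdegree}.
\end{equation}
Nonnegativity is then immediate since a variance is always nonnegative and $\avgdegree > 0$ for any network with at least one edge (the edgeless case being vacuous). The second statement is obtained by the identical argument with $\od$ replaced by $\id$ and $Y$ replaced by $Z$, using \eqref{eq:distribution_Z}.

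There is no genuinely hard step here; the only points requiring a moment of care are (i) that the two normalizing constants $\sum_{v}\od(v)$ and $\sum_{v}\id(v)$ are both equal to $N\avgdegree$ — which is exactly the remark preceding the theorem that total in-degree equals total out-degree in any directed graph — and (ii) the trivial positivity caveat on $\avgdegree$. The computation is the directed-network analogue of Feld's original friendship-paradox identity, with out-degree-biased (respectively in-degree-biased) node sampling playing the role of sampling a ``random friend.''
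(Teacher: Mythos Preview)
Your proposal is correct and follows essentially the same approach as the paper: direct computation of $\mathbb{E}\{\od(Y)\}$ via the size-biased law \eqref{eq:distribution_Y}, reduction to $\mathbb{E}\{\od(X)^2\}/\avgdegree$, and the variance identity. The paper's proof is organized slightly differently (it subtracts $\avgdegree$ before simplifying rather than after), but the argument is the same, and it likewise dispatches part~2 by symmetry.
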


The remaining two variants of the friendship paradox state that (3)~random friends have more friends than random nodes do, and (4)~random followers have more followers than random nodes do~(on average). In contrast to the first two variants of paradox (Theorem \ref{th:friendship_paradox_any_network}), these require positive correlation between the in-degree and out-degree of nodes in the network (Theorem \ref{th:friendship_paradox_positivelyCorrelated_inout_degree_network}).
	\begin{theorem}
		\label{th:friendship_paradox_positivelyCorrelated_inout_degree_network}
		Let $G = (V,E)$ be a directed network where in-degree $\id(X)$ and out-degree $\od(X)$ of a random node $X$ are positively correlated. Then,
		\begin{compactenum}
			\item random friend $Y$ has more friends than a random node $X$ does, on average; i.e.,
			\begin{equation}
				\mathbb{E}\{\id(Y)\} - \avgdegree = \frac{\cov\{\id(X),\od(X)\}}{\avgdegree} \geq 0.
			\end{equation}
			
			\item random follower $Z$ has more followers than a random node $X$ does, on average; i.e.,
			\begin{equation}
				\mathbb{E}\{\od(Z)\} - \avgdegree = \frac{\cov\{\id(X),\od(X)\}}{\avgdegree} \geq 0.
			\end{equation}
		\end{compactenum}
	\end{theorem}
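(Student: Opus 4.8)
The plan is to mirror the computation behind Theorem~\ref{th:friendship_paradox_any_network}, but with the mixed moment $\mathbb{E}\{\id(X)\od(X)\}$ playing the role that the second moments $\mathbb{E}\{\od(X)^2\}$ and $\mathbb{E}\{\id(X)^2\}$ played there. First I would write $\mathbb{E}\{\id(Y)\}$ out directly from the definition \eqref{eq:distribution_Y} of the random friend $Y$, using the identity $\sum_{v'\in V}\od(v') = N\avgdegree$ to clean up the normalization:
\begin{equation*}
\mathbb{E}\{\id(Y)\} = \sum_{v\in V} \id(v)\,\frac{\od(v)}{N\avgdegree} = \frac{1}{\avgdegree}\cdot\frac{1}{N}\sum_{v\in V}\id(v)\od(v) = \frac{\mathbb{E}\{\id(X)\od(X)\}}{\avgdegree}.
\end{equation*}

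Next I would expand the numerator via the definition of covariance, invoking the bookkeeping fact (established in the setup) that the average in-degree and average out-degree of a random node both equal $\avgdegree$, so that $\mathbb{E}\{\id(X)\od(X)\} = \cov\{\id(X),\od(X)\} + \mathbb{E}\{\id(X)\}\mathbb{E}\{\od(X)\} = \cov\{\id(X),\od(X)\} + \avgdegree^{2}$. Dividing by $\avgdegree$ and subtracting $\avgdegree$ gives the claimed identity of part~1. The very same three lines, with the roles of $\id$ and $\od$ swapped in the sampling weight \eqref{eq:distribution_Z}, give $\mathbb{E}\{\od(Z)\} = \mathbb{E}\{\id(X)\od(X)\}/\avgdegree$; since this is the identical cross-moment, part~2 has exactly the same right-hand side — which is the structural reason variants (3) and (4) always share the same magnitude. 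Nonnegativity is then immediate: $\avgdegree>0$ for any network with at least one edge, and the hypothesis that $\id(X)$ and $\od(X)$ are positively correlated means precisely $\cov\{\id(X),\od(X)\}\ge 0$.

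I do not expect a genuinely hard step here; the argument is a short size-biased-sampling identity. The only points that need care are (i) being explicit that ``positively correlated'' is used in the weak sense $\cov\{\id(X),\od(X)\}\ge 0$ (otherwise one obtains only a strict inequality, and the degenerate zero-covariance cases would have to be excluded), and (ii) flagging the centering step $\mathbb{E}\{\id(X)\}=\mathbb{E}\{\od(X)\}=\avgdegree$, which is exactly what makes the \emph{covariance}, rather than the raw cross-moment, appear in the final expression. Everything else parallels the proof of Theorem~\ref{th:friendship_paradox_any_network} verbatim.
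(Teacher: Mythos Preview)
Your proposal is correct and matches the paper's own proof essentially line for line: both compute $\mathbb{E}\{\id(Y)\}$ from the out-degree--biased sampling weight, identify the resulting sum as $\mathbb{E}\{\id(X)\od(X)\}/\avgdegree$, and then subtract $\avgdegree$ to expose the covariance, with part~2 handled by the symmetric swap. Your extra remark that the two parts share the same cross-moment (hence the same magnitude) is a nice structural observation that the paper leaves implicit.
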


	Theorem \ref{th:friendship_paradox_positivelyCorrelated_inout_degree_network} states that in networks where the in- and out-degrees of a random node are positively correlated, (1) the expected number of friends of a random friend is greater than the expected number of friends of a random node, and (2) the expected number of followers of a random follower is greater than that of a random node.
	
\begin{figure}[H]
    \centering
    \begin{subfigure}[t]{0.48\columnwidth}
        \centering
        \includegraphics[width=\columnwidth]{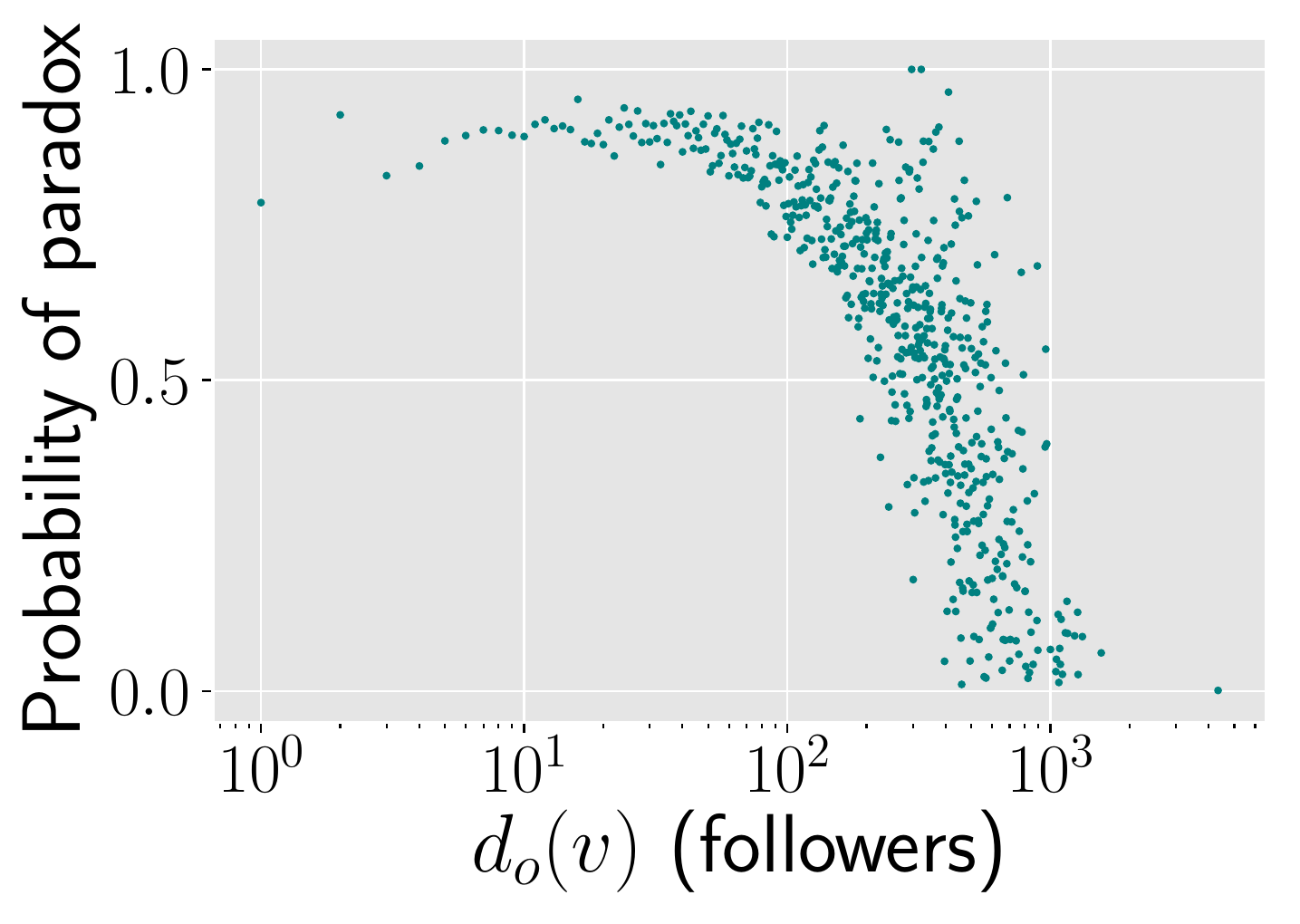}
        \caption{Friends have more followers}\label{fig:sp-oo}
    \end{subfigure}
    \begin{subfigure}[t]{0.48\columnwidth}
        \centering
        \includegraphics[width=\columnwidth]{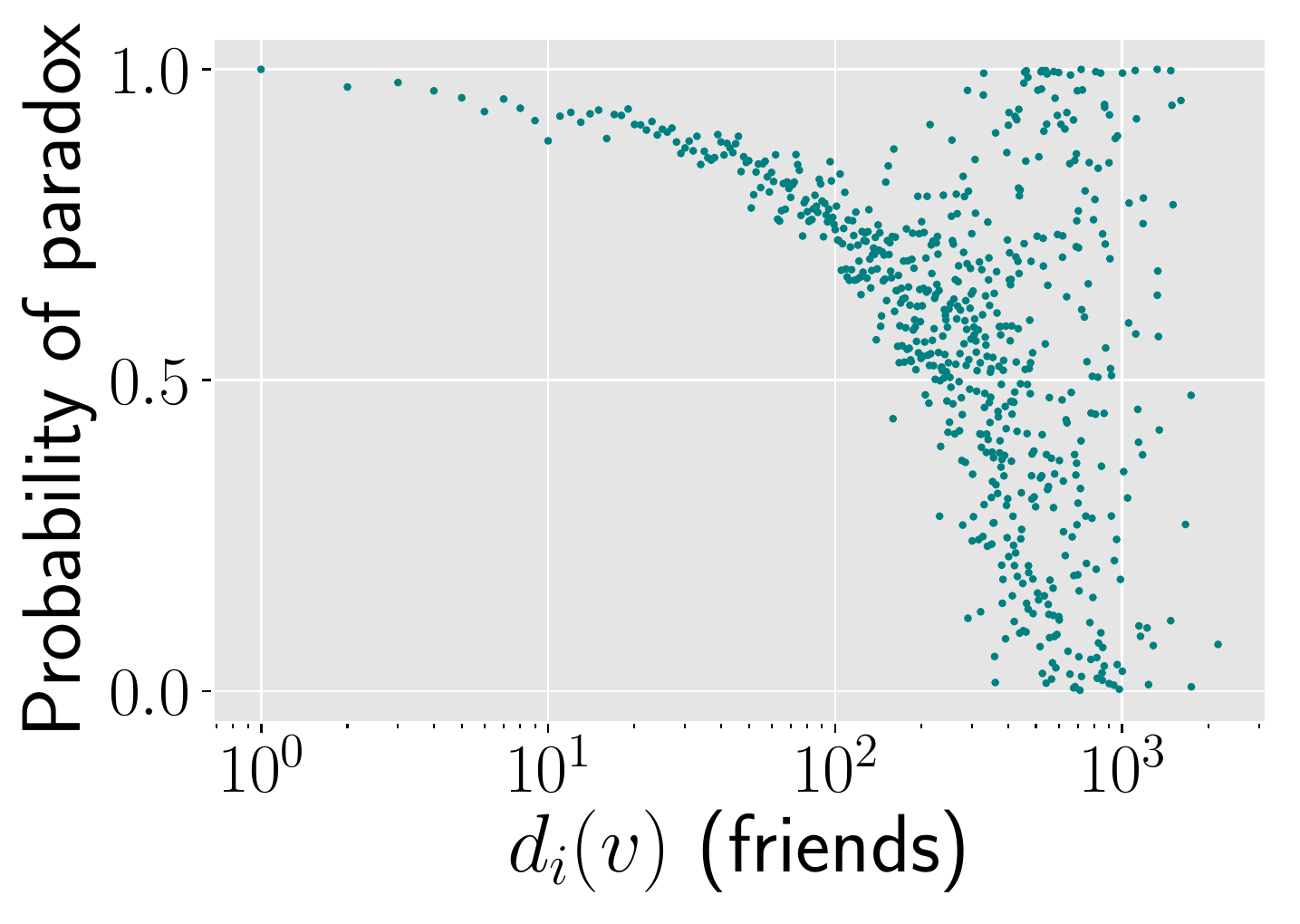}
        \caption{Followers have more friends}\label{fig:sp-ii}
    \end{subfigure}
       \begin{subfigure}[t]{0.48\columnwidth}
        \centering
        \includegraphics[width=\columnwidth]{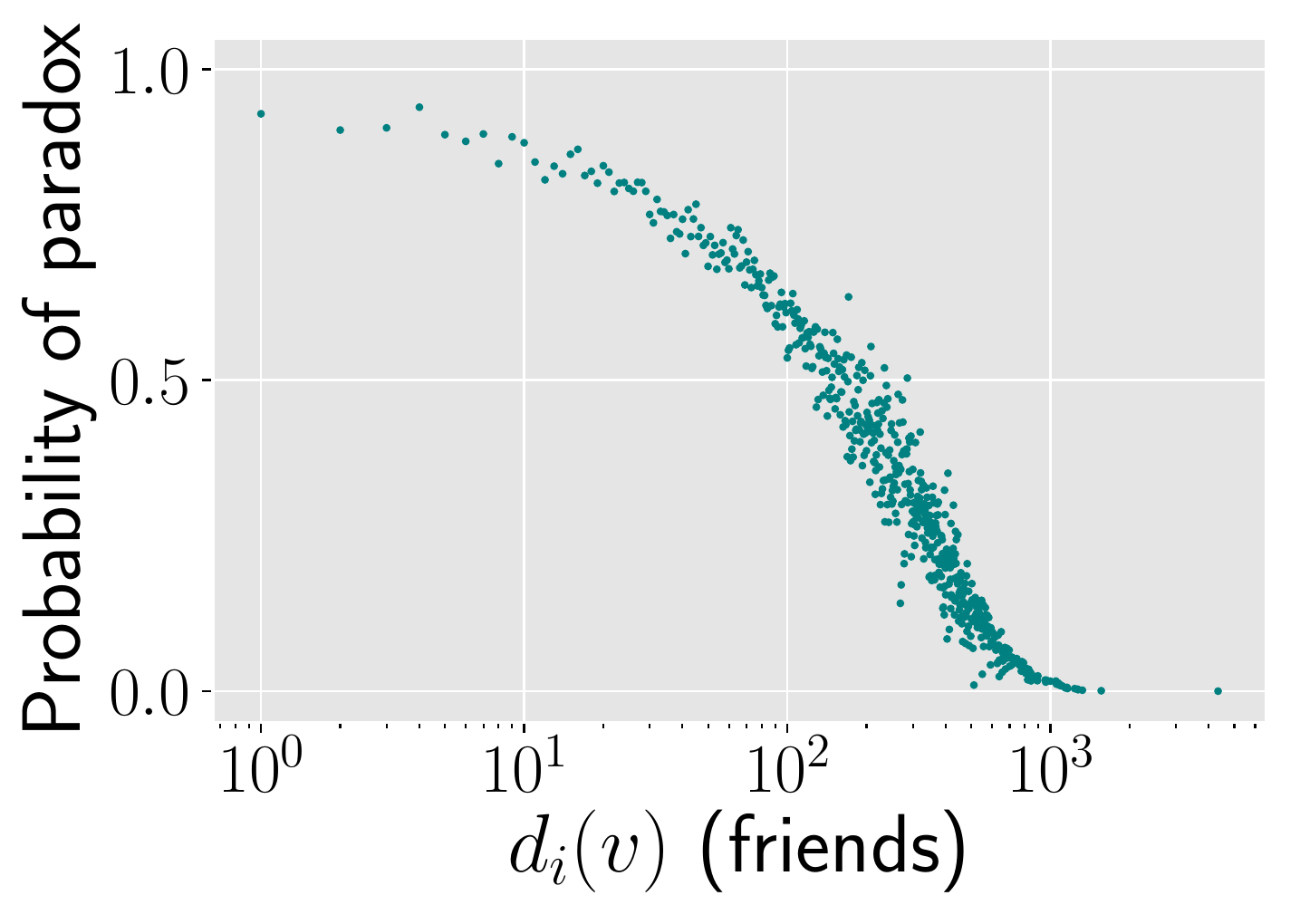}
        \caption{Friends have more friends}\label{fig:sp-io}
    \end{subfigure}
    \begin{subfigure}[t]{0.48\columnwidth}
        \centering
        \includegraphics[width=\columnwidth]{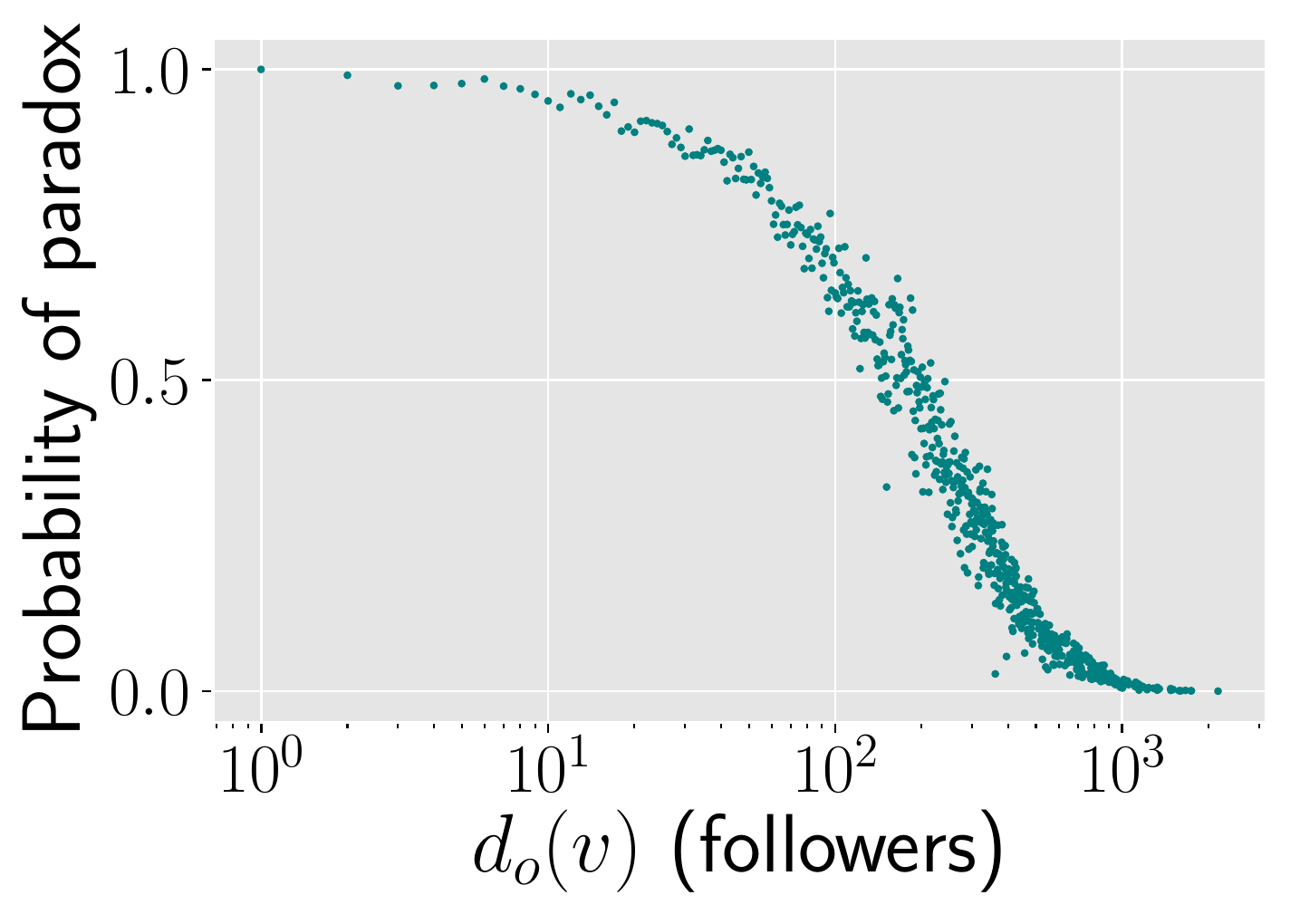}
        \caption{Followers have more followers}\label{fig:sp-oi}
    \end{subfigure}
    \caption{Illustration of the effects of the four versions of the friendship paradox using dataset described in Sec. \ref{subsecc:data}.
    The sub-figures display the fraction of nodes (empirical probability of the paradox) of a particular degree whose (a) friends have more followers, (b) followers have more friends, (c) friends have more friends, and (d) followers have more followers, on average.
    \label{fig:sp}}
\end{figure}

Figure \ref{fig:sp} illustrates the four variants of the friendship paradox in the subgraph of the directed social network of Twitter (see Methods). Specifically, it shows the fraction of individuals with a specific in-degree (or out-degree) that experiences the paradox. Note that this fraction is high: at least half of the users with fewer than 100 friends or followers observe that they are less popular and well-connected than their friends and followers are, on average.

\subsection{Perception Biases in Directed Networks}
	\label{subsec: perception_bias_and_FP}
	
When nodes have distinguishing traits or attributes, friendship paradox can bias perceptions of those attributes. For simplicity we assume that each node has a binary valued attribute  ($f: V \rightarrow \{0,1\}$). Such binary functions are useful for representing, among others, voting preferences (Democratic or Republican), demographic characteristics (female or male), contagions (infected vs susceptible), or the spread of information in networks (using a particular hashtag or not).

\subsubsection{Global Perception Bias}	
The \emph{global prevalence} of the attribute in a directed network is given by $\mathbb{E}\{f(X)\}$, the expected attribute value of a random node $X$. In other words, when only 5\% of nodes have the attribute $f(v)=1$, for example, they tweeted about a topic, its expected value is  $\mathbb{E}\{f(X)\}=0.05$.
However, nodes' perceptions of the prevalence of the attribute are determined by its value among their friends.  In other words, 
nodes' perception of how prevalent the attribute is, is given by the expected attribute value of a randomly chosen friend $Y$: $\mathbb{E}\{f(Y)\}$. On Twitter this translates into how many people see the topic in their social feed, which aggregates posts made by friends.
Under some conditions, the perceived prevalence of the attribute $\mathbb{E}\{f(Y)\}$ will be very different from its actual prevalence $\mathbb{E}\{f(X)\}$.  We define this as \emph{global perception bias}:
	
		\begin{align}
			\Bglobal = \mathbb{E}\{f(Y)\} - \mathbb{E}\{f(X)\} &= \frac{\cov(f(X),\od(X))}{\avgdegree}\\
			&=\frac{\PCCodf\SDod \SDf}{\avgdegree},	\label{eq:global_perception_bias_Y}
		\end{align}

\noindent where $\PCCodf$ is the Pearson correlation coefficient between out-degree and attribute value of a random node, $\SDod$ is the standard deviation of the out-degree distribution, and $\SDf$ is the standard deviation of the binary attributes (see appendix for the derivation).

When the attribute is correlated with out-degree ($\PCCodf > 0$), a random friend's attribute is larger than the attribute value of a random node, on average. In undirected networks this effect is known as \emph{generalized friendship paradox}~\cite{Eom14}, and it has the same intuition: 
when popular people (i.e., those with many followers) are more likely to possess some trait ($\PCCodf > 0$), that trait will be overrepresented among the friends of any individual. As a result, people will tend to overestimate the trait's prevalence. This may explain the observation that adolescents overestimate the number of smokers or heavy drinkers among their peers~\cite{Baer91}. All that is required for the bias to hold is if peers with risky behaviors tended to be more popular.
	
Note that 
the magnitude of the friendship paradox $S_{FP} = \mathbb{E}\{\od(Y)\} - \avgdegree = \frac{\sigma^2_{\od}}{\avgdegree}$ increases with the standard deviation of the out-degree distribution ($\sigma_{\od }$) and decreases with the average degree ($\avgdegree$). Global perception bias $\Bglobal$ also increases with $\sigma_{\od }$ and decreases with $\avgdegree$ when the correlation coefficient $\PCCodf$ remains fixed. Hence, friendship paradox amplifies global perception bias, increasing the deviation between the actual and observed prevalence 
of the attribute in the network.

\subsubsection{Local Perception Bias}
Since information in a directed network flows to individuals from their friends, their perceptions of the world are given by the 
values of the attribute among their friends. One problem with using $\Bglobal$ to measure perception bias is that $\mathbb{E}\{f(Y)\}$ captures the expected attribute value among the friends of 
all individuals, rather than friends 
of a randomly chosen individual $X$.
Therefore, we define an alternate measure of perception bias---\emph{local perception bias}---
that considers a node's perception of an attribute based on its expected value among its friends.

Formally, the perception $q_f(v)$ of a node $v\in V$  about the 
prevalence of an attribute $f$ is
	\begin{equation}
	\label{eq:defn_qf}
		q_f(v) = \frac{\sum_{ u \in {Fr}(v)}f(u)}{\id (v)},
	\end{equation}
where $Fr(v)$ denotes the set of friends of $v$. Local perception bias is then the deviation of the expected perception of a random individual from its global prevalence:
		\begin{align}
			\Blocal = {\mathbb{E}\{q_f(X)\}} - {\mathbb{E}\{f(X)\}}.	\label{eq:local_perception_bias_Y}
		\end{align}
To help quantify this value, we define attention that a node $v\in V$ allocates to each of her friends:
	\begin{equation}
		\mathcal{A}(v) = \frac{1}{\id (v)}.
	\end{equation}
The analogy is motivated by an observation that users with more friends tend to receive more messages~\cite{GomezRodriguez13}, making them less likely to see any specific friend's post~\cite{Hodas12socialcom}. This allows us to succinctly express the expected perception of a random node $X$ as  $\mathbb{E}\{q_f(X)\} = \avgdegree \cdot \mathbb{E}\{f(U) \mathcal{A}(V)\vert (U, V) \sim \uniform(E) \}$  (see appendix for derivation). Here, $\avgdegree$ is the expected number of friends of a random node, and $U$ and $V$ denote the endpoints of a link sampled uniformly from $E$. Intuitively, $\mathbb{E}\{f(U) \mathcal{A}(V)\vert (U, V) \sim \uniform(E) \}$ represents the expected influence of an interaction along an link drawn at random from the network: i.e., the attribute $f(U )$ of the friend $U$ times the attention that the follower $V$ pays to that friend. Hence, the expected perception $\mathbb{E}\{q_f(X)\}$ of a random node $X$ is the product of the average number of interactions $\avgdegree$ and the average influence of an interaction.

The appendix notes show that local perception bias exists, i.e., ${\Blocal \geq 0}$, which indicates the local overestimation of global prevalence, if the following conditions are met:
		\begin{align}
			\label{eq:suff_condition_1_local_perception_bias}
			\cov\{f(X), \od(X)\} &\geq 0 \quad \text{and,}\\
			\label{eq:suff_condition_2_local_perception_bias}
			\cov\{f(U), \mathcal{A}(V)\vert {(U,V) \sim \uniform(E)}\} &\geq 0.
		\end{align}
The first condition (Eq.~\ref{eq:suff_condition_1_local_perception_bias})
specifies positive correlation between the out-degree and the attribute of a random node,
 implying that popular nodes are more likely to have the attribute. The second condition (Eq.~\ref{eq:suff_condition_2_local_perception_bias}) 
specifies positive correlation between the attention of a follower and the attribute of a friend, 
suggesting that nodes with higher attribute values will appear as friends of nodes that follow few others. 
These two conditions are sufficient 
for positive local perception bias ($\Blocal > 0$),  leading individuals to overestimate the attribute's prevalence.
Further, the two conditions~(Eq.~\ref{eq:suff_condition_1_local_perception_bias} and Eq.~\ref{eq:suff_condition_2_local_perception_bias}) also ensure that $\Blocal > \Bglobal > 0$ as shown in Appendix Section 2. 
Hence, under these two conditions, local perception bias and global perception bias will both indicate overestimation of the global prevalence.

However, $\Bglobal$ and $\Blocal$ can differ significantly in certain settings. For example, there exist situations where the two measures have different signs, with one measure suggesting overestimation and the other suggesting underestimation of an attribute's prevalence by individuals in the social network. 
In such cases, we propose using $\Blocal$ to measure the perception bias, as it takes more structural properties of the network into account. Further, global perception bias $\Bglobal$ and local perception bias $\Blocal$ are equal if and only if the attribute $f(U)$ of $U$ and attention $\mathcal{A}(V)$ of a random link $(U,V)$ are uncorrelated, i.e.,
		\begin{align}
		\cov\{f(U), \mathcal{A}(V)\vert {(U,V) \sim \uniform(E)}\} &= 0
		\end{align}
as we show in the appendix notes \ref{appendix:th_local_PB}.


\subsubsection{Empirical Validation}
To measure perception bias, we used data from Twitter (see Methods) to compare the actual and perceived popularity of various hashtags mentioned in text posts. We treat each hashtag $h$ as a binary attribute, with $f_h(v)=1$ if a user $v$ used the hashtag $h$ in his or her posts.

\begin{figure}[H]
    \centering
    \begin{subfigure}[t]{0.48\columnwidth}
        \centering
        \includegraphics[width=\columnwidth]{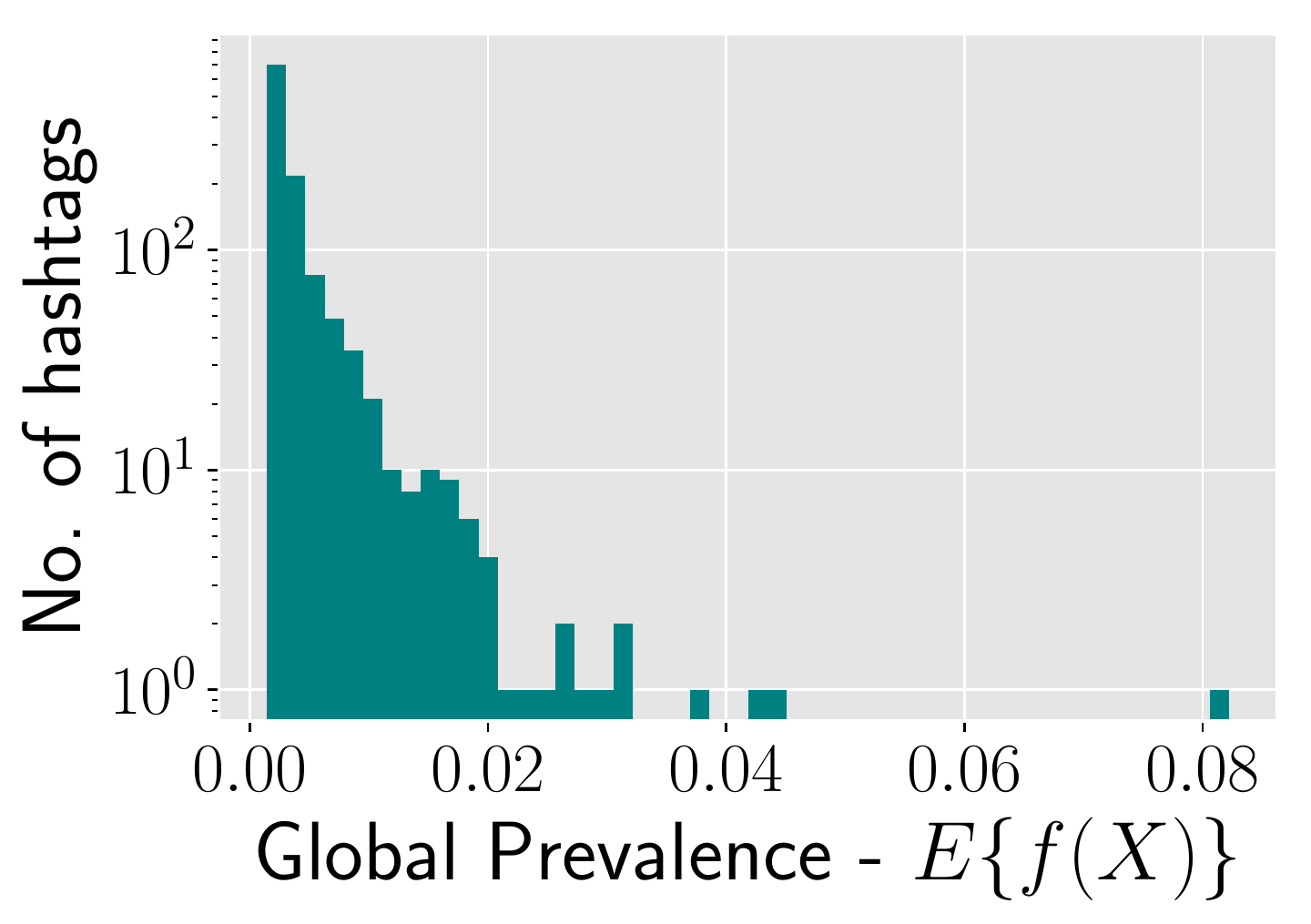}
        \caption{\label{fig:efx}}
    \end{subfigure}
    \begin{subfigure}[t]{0.48\columnwidth}
        \centering
        \includegraphics[width=\columnwidth]{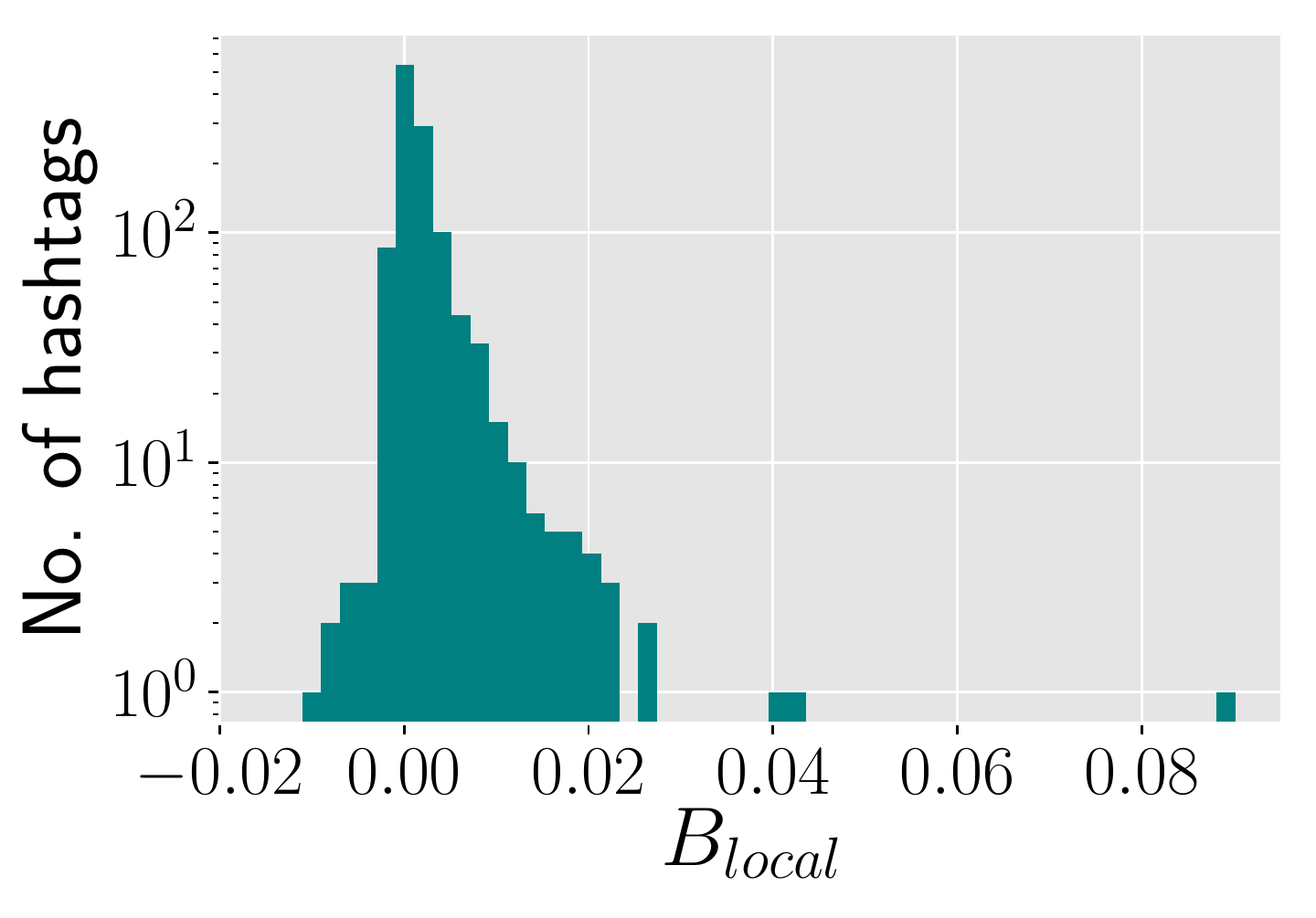}
        \caption{\label{fig:local}}
    \end{subfigure}
    \caption{
    	Histogram of the distribution of (a) global prevalence $\mathbb{E}\{f(X)\}$ and (b) local perception bias $\Blocal$ of popular hashtags in the Twitter data. Local perception bias $\Blocal$ (overestimating the prevalence) exists for most hashtags.  
    \label{fig:covs}}
\end{figure}

Figure~\ref{fig:efx} shows the histogram of the prevalence ($\mathbb{E}\{f(X)\}$) of the 1,153 most popular hashtags, each used by more than 1,000 people in our data set. The bulk of these hashtags were used by fewer than 2\% of the people, with the most popular hashtags being used by just 8\% of the people in the subgraph.
Figure~\ref{fig:local} shows the histogram of $\Blocal$ value for all hashtags. Although its peak is at zero, the distribution is skewed, with $865$ hashtags having positive bias, meaning that they appear more popular than they really are.

\begin{figure}
    \includegraphics[width=0.9\columnwidth]{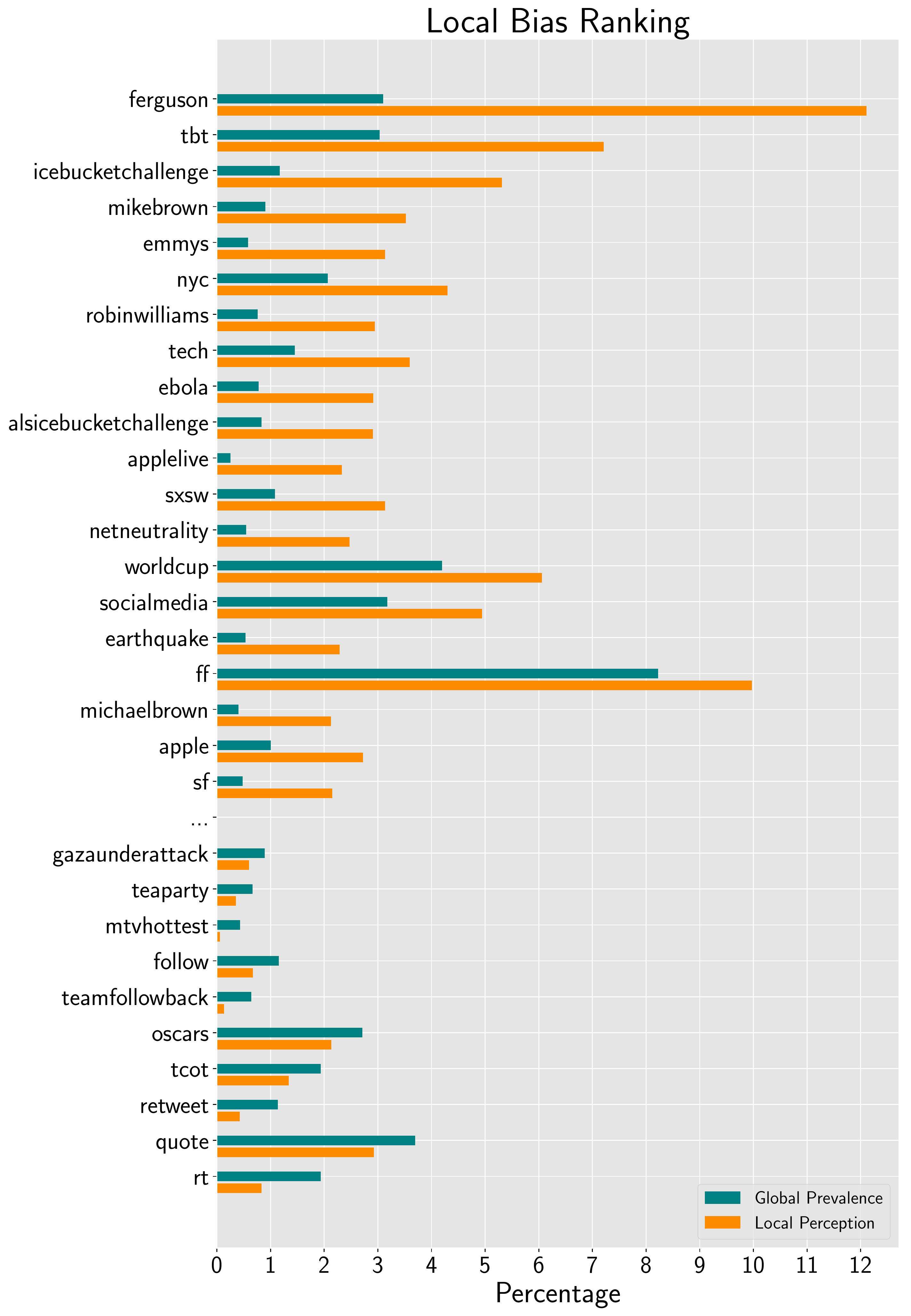}

 \resizebox{0.9\textwidth}{!}{\renewcommand{\arraystretch}{0.9}%
 \begin{tabular}{|@{ }c@{ }|@{ }c@{ }|@{ }c@{ }|@{ }c@{ }||@{ }c@{ }|@{ }c@{ }|@{ }c@{ }|@{ }c@{ }|}
 \hline
 \multicolumn{1}{|c|}{ff} & \multicolumn{7}{|c|}{\textbf{F}ollow \textbf{F}riday: introducing account worth following.} \\
 \multicolumn{1}{|c|}{rt} & \multicolumn{7}{|c|}{\textbf{R}e\textbf{T}weet} \\
 \multicolumn{1}{|c|}{tbt} & \multicolumn{7}{|c|}{\textbf{T}hrow\textbf{B}ack \textbf{T}hursday: posting an old picture on Thursdays.} \\
 \multicolumn{1}{|c|}{tcot} & \multicolumn{7}{|c|}{\textbf{T}op \textbf{C}onservatives \textbf{O}n \textbf{T}witter} \\
 \multicolumn{1}{|c|}{mike(/michael)brown, ferguson} & \multicolumn{7}{|c|}{an 18-years-old African American man killed by police.} \\
 \multicolumn{1}{|c|}{(als) icebucketchallenge} & \multicolumn{7}{|c|}{A challenge to promote awareness of ALS disease.} \\
 \multicolumn{1}{|c|}{sxsw} & \multicolumn{7}{|c|}{South by Southwest: Annual conglomerate of film and music festivals.} \\
 \hline
 \end{tabular}
 }
    \caption{The ranking of popular Twitter hashtags based on \textit{Local Bias}. Top-$20$ and bottom-$10$ are included in the ranking.  The bars compare $\mathbb{E}\{f(X)\}$ (global prevalence) and $\mathbb{E}\{q_f(X)\}$ (local perception). The hashtags can appear to be much more popular than they actually are (e.g. \textit{\#ferguson}) or, they can appear to be less popular (e.g. \textit{\#oscars}) due to local perception bias. \label{fig:local_ranking}}
\end{figure}

What hashtags have most bias? 
Figure~\ref{fig:local_ranking} shows the top-20 and bottom-10 hashtags ranked by $\Blocal$.
Among the most positively biased hashtags are those associated with social movements (\textit{\#ferguson, \#mikebrown, \#michaelbrown}), memes and current events (\textit{\#icebucketchallenge, \#alsicebucketchallenge}, \textit{\#ebola}, \textit{\#netneutrality}), sports and entertainment (\textit{\#emmys}, \textit{\#robinwilliams}, \textit{\#sxsw}, \textit{\#applelive}, \textit{\#worldcup}). For example, \textit{\#ferguson}, with $\mathbb{E}\{q_{f}(X)\}=12.1\%$, 
is perceived as the most popular hashtag. While it is also one of the more widely-used hashtags, with $\mathbb{E}\{f(X)\}=3.1\%$, perception bias makes it appear about four times more popular to Twitter users, on average, than it actually is. Interestingly, there are also hashtags with negative bias, indicating that they appear less popular than they actually are. Among these hashtags are Twitter conventions aimed at getting more followers (\textit{\#tfb}, \textit{\#followback}, \textit{\#follow}, \textit{\#teamfollowback}) or more retweets (\textit{\#shoutout}, \textit{\#pjnet}, \textit{\#retweet}, \textit{\#rt}).
Many of these hashtags are actually among the top-20 most popular Twitter hashtags (\textit{\#oscars}, \textit{\#tcot}, \textit{\#quote} and \textit{\#rt}), but due to the structure of the network, they appear less popular to users.
This occurs either because people who use these hashtags do not have many followers ($\cov\{f(X), \od(X)\}<0$), or the attention of their followers is diluted because they follow many others ($\cov\{f(U), \mathcal{A}(V)\}<0$). For example, for \textit{\#oscars}, both of the covariances are negative. The ranking of hashtags based on global bias is available in Figure \ref{fig:global_bias_ranking}.

\begin{figure}[H]
    \centering
    \begin{subfigure}[t]{0.48\columnwidth}
        \centering
        \includegraphics[width=\columnwidth]{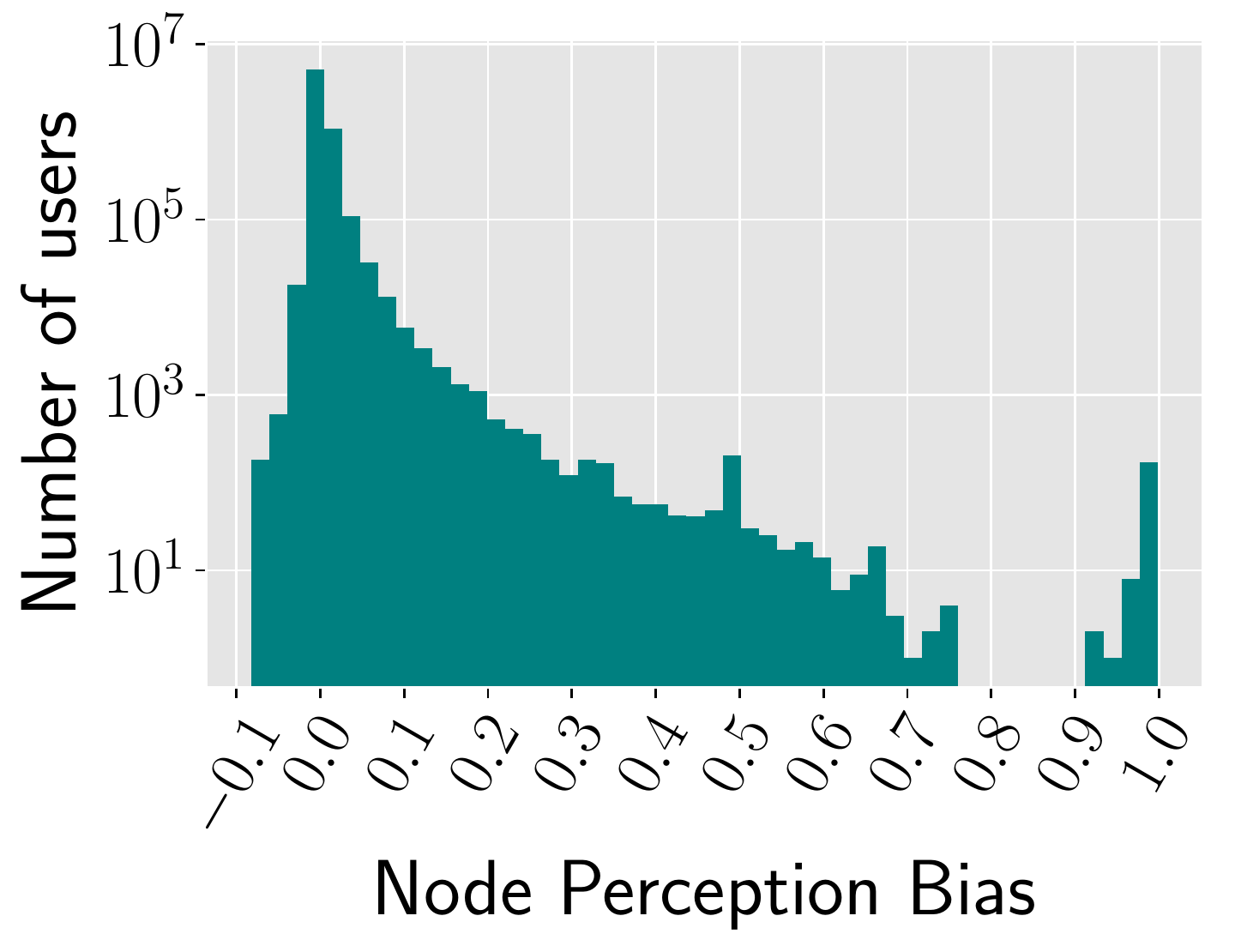}
        \caption{\label{fig:individual}}
    \end{subfigure}
    \begin{subfigure}[t]{0.48\columnwidth}
        \centering
        \includegraphics[width=\columnwidth]{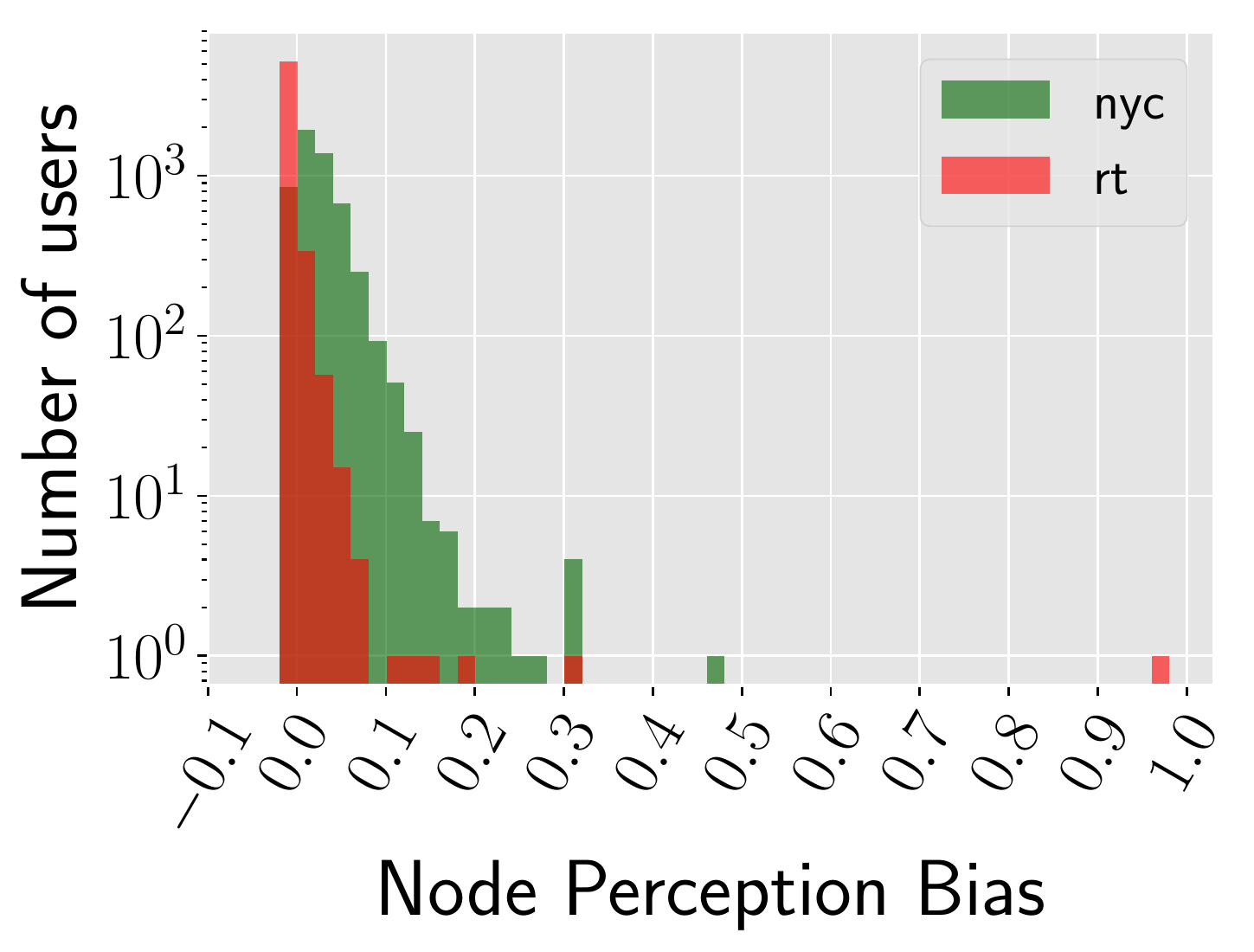}
        \caption{\label{fig:hashtag_individual}}
    \end{subfigure}
    \caption{ Individual-level perception bias  ${q_{f_{h}}(v)-\mathbb{E}\{f(X)\}}$ for (a) all hashtags $h$ and all nodes $v\in V$,  and (b) for two hashtags with similar global prevalence, but with positive (\textit{\#nyc}) and negative (\textit{\#rt}) $\Blocal$. This illustrates that most hashtags are positively biased for individuals, with bias levels that do not depend on global prevalence.  \label{fig:userlevel}}
\end{figure}

At an individual level, the popularity of a hashtag $h$ among the friends of a user $v\in V$ is given by $q_{f_{h}}(v)$. 
The individual-level perception bias is then $B_{h}(v)=q_{f_{h}}(v) - \mathbb{E}\{f_{h}(X)\}$, where $\mathbb{E}\{f_{h}(X)\}$ is the global prevalence of hashtag $h$. Figure~\ref{fig:userlevel} shows the empirical distribution of $B_{h}(v)$ for all users and hashtags. Most of the mass of the histogram is for $B_{h}(v)>0$, suggesting that most of the people in our data overestimate the popularity of these hashtags.

Figure \ref{fig:hashtag_individual} compares individual-level perception bias for two hashtags that have similar global prevalence:  \textit{\#nyc} ($\mathbb{E}\{f(X)\}= 0.021$) and \textit{\#rt} ($\mathbb{E}\{f(X)\}= 0.019$). Of the two hashtags, \textit{\#nyc} is perceived as more popular (with $\Blocalof{\textit{\#nyc}}= 0.022$), but
\textit{\#rt} appears less popular (with $\Blocalof{\textit{\#rt}}= -0.011$) than it is globally.

\subsection{Estimating Global Prevalence via Polling}
\label{sec:polling}



Polling estimates the global prevalence $\mathbb{E}\{f(X)\}$ of an attribute by sampling random individuals and averaging their answers to some question. The accuracy of a poll depends on two key factors: (i) the method of sampling individuals (sampling distribution) and, (ii) the question presented to them. We propose a practical polling algorithm (Algorithm \ref{alg:FPP}) that differs from the currently used polling algorithms in both aspects. First, our algorithm samples random followers (step 1 of Algorithm~\ref{alg:FPP}) instead of random individuals, as is done by most alternative methods. Second, instead of asking about their own attribute, the sampled individuals are asked about their perception (step 2 of Algorithm~\ref{alg:FPP}):
	
	{\centering\textit{``What do you think is fraction of individuals with attribute 1?"}}
	
\noindent	 Consequently, we call the proposed algorithm \textit{Follower Perception Polling (FPP)} algorithm.
	
	 \begin{algorithm}
	 	\caption{Follower Perception Polling (FPP) Algorithm}
	 	\label{alg:FPP}
	 	\DontPrintSemicolon 
	 	\KwIn{Graph $G = (V, E)$, perceptions $q_f:V\rightarrow \mathbb{R}^+$, sampling budget $b$.}
	 	\KwOut{Estimate $\FPPest$ of $\mathbb{E}\{f(X)\} = \frac{\sum_{v \in V}f(v)}{N}$.}
	 	
	 	\vspace{0.3cm}
	 	
	 	\begin{enumerate}
	 		\item Sample a set $S \subset V$ of $b$ followers independently from the distribution $$p_v =  \frac{\id(v)}{\sum_{v'\in V}\id (v')}, \quad \forall v\in V.$$

	 		\item Compute the estimate
	 		\begin{equation}
	 		\label{eq:NEP_estimate}
	 		\FPPest = \frac{1}{b} \sum_{ v \in S}  q_f(v).
	 		\end{equation}
	 	\end{enumerate}
	 \end{algorithm}

As random followers have more friends than random nodes (on average), according to Theorem \ref{th:friendship_paradox_any_network}, the key idea behind the FPP algorithm is to sample individuals who have more friends. As a result, the variance of the perceptions of random followers will be smaller (compared to that of random nodes) and hence, will result in a more accurate (lower mean-squared error) estimate of the global prevalence of the attribute. We analytically show that (see Methods) (i) the bias of the estimate $\FPPest$ produced by the FPP algorithm is same as the global perception bias $\Bglobal$ and, (ii) variance of the estimate $\FPPest$ produced by the FPP algorithm is bounded above by a function of the correlation between out-degree and the attribute as well as spectral properties of the network (i.e. second largest eigenvalue of the bibliographic coupling matrix).

\begin{figure}[tbh]
    \centering
    \begin{subfigure}[t]{0.49\columnwidth}
        \centering
        \includegraphics[width=\columnwidth]{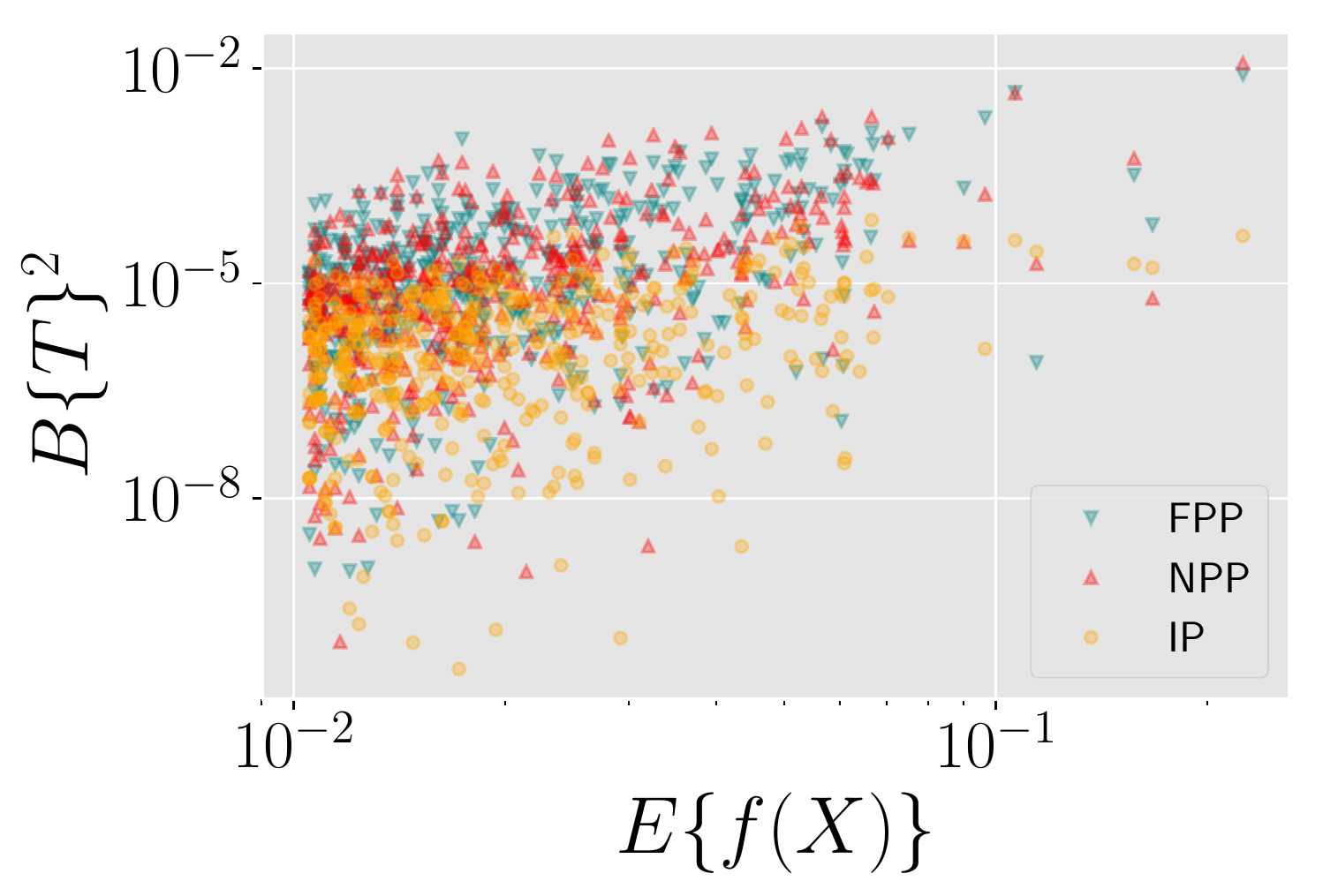}
        \caption{\label{fig:bias_polling}}
    \end{subfigure}
    \centering
    \begin{subfigure}[t]{0.49\columnwidth}
        \centering
        \includegraphics[width=\columnwidth]{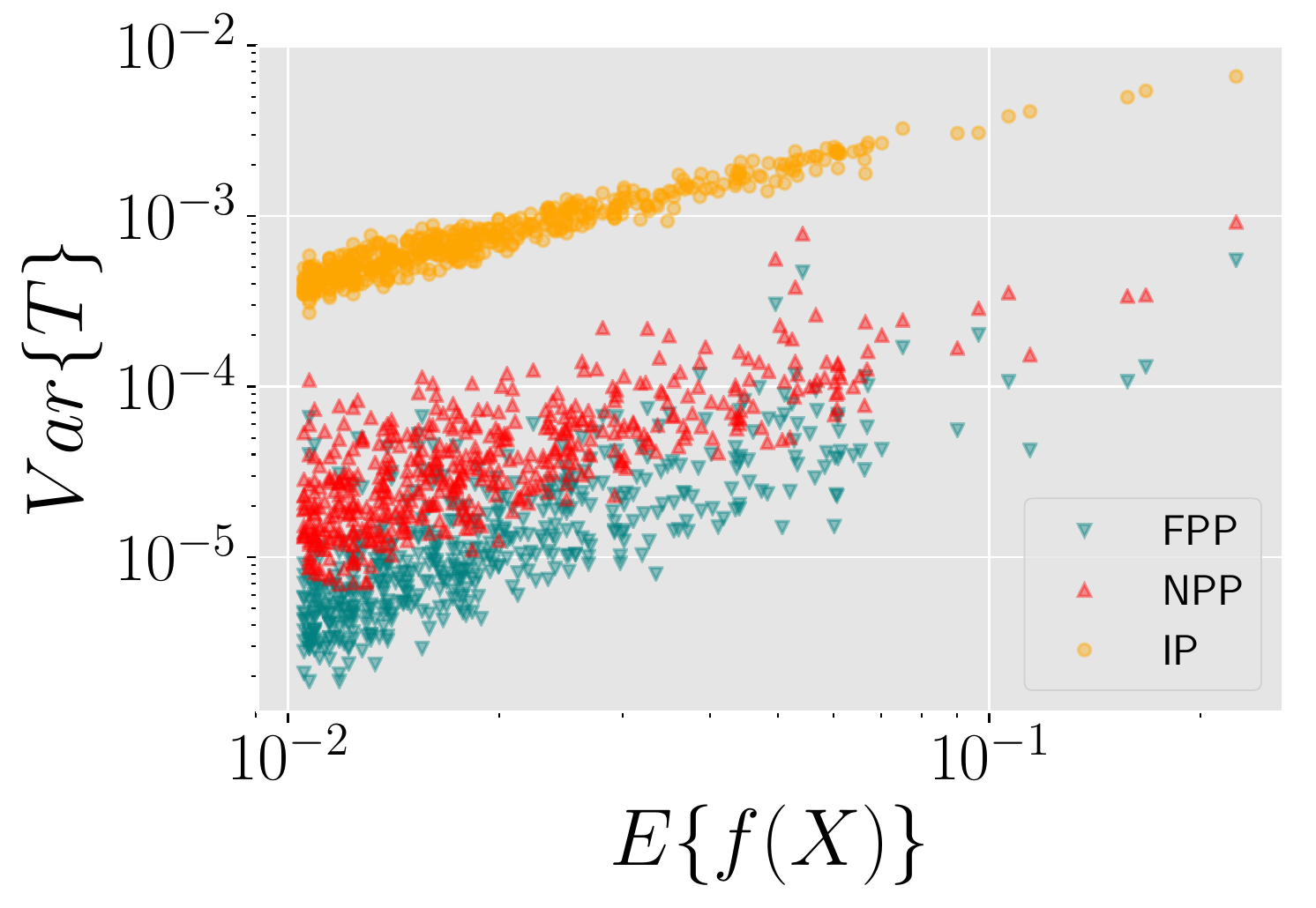}
        \caption{\label{fig:var_polling}}
    \end{subfigure}

    \begin{subfigure}[t]{0.49\columnwidth}
        \centering
        \includegraphics[width=\columnwidth]{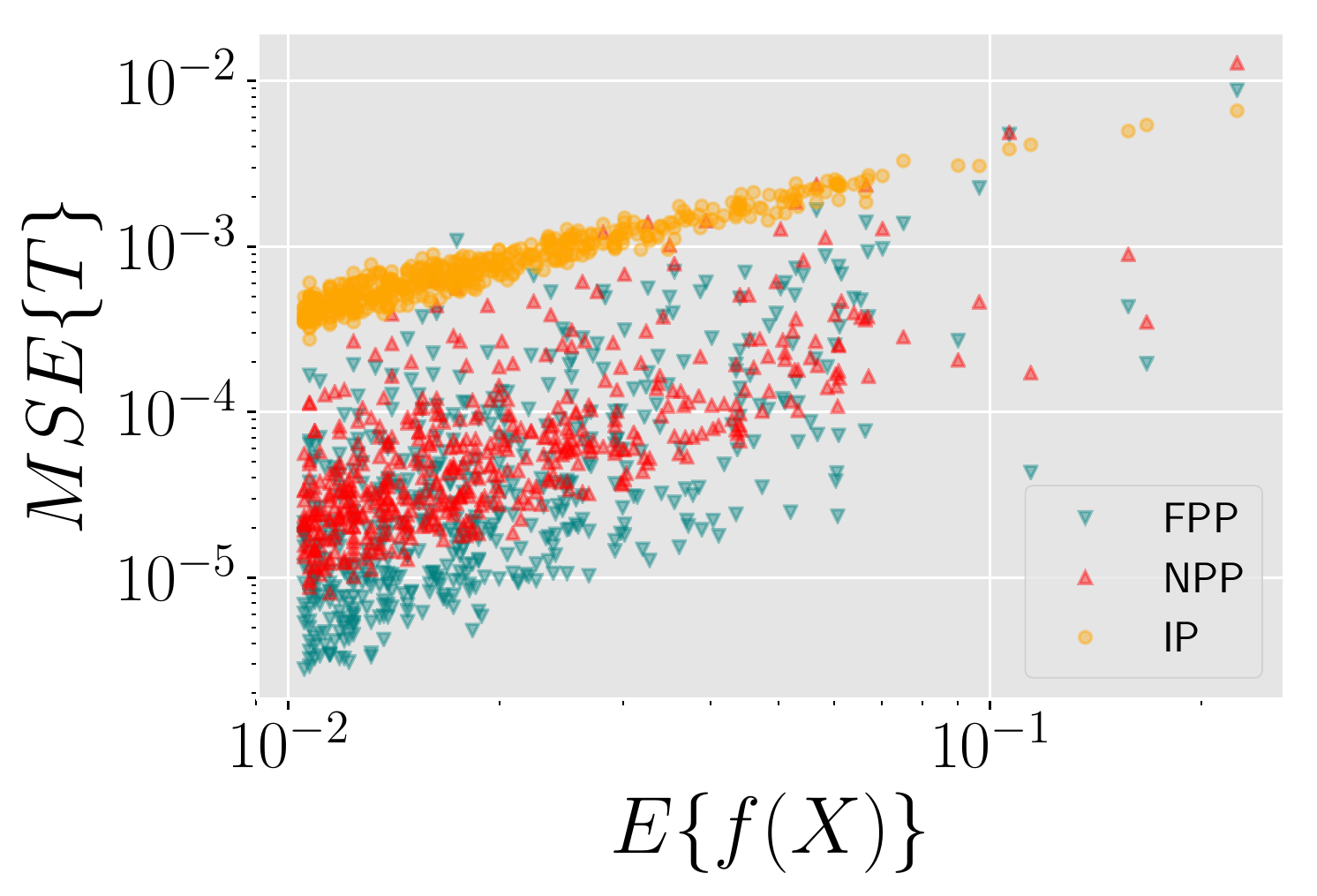}
        \caption{\label{fig:mse_polling}}
    \end{subfigure}
    \begin{subfigure}[t]{0.49\columnwidth}
        \centering
        \includegraphics[width=\columnwidth]{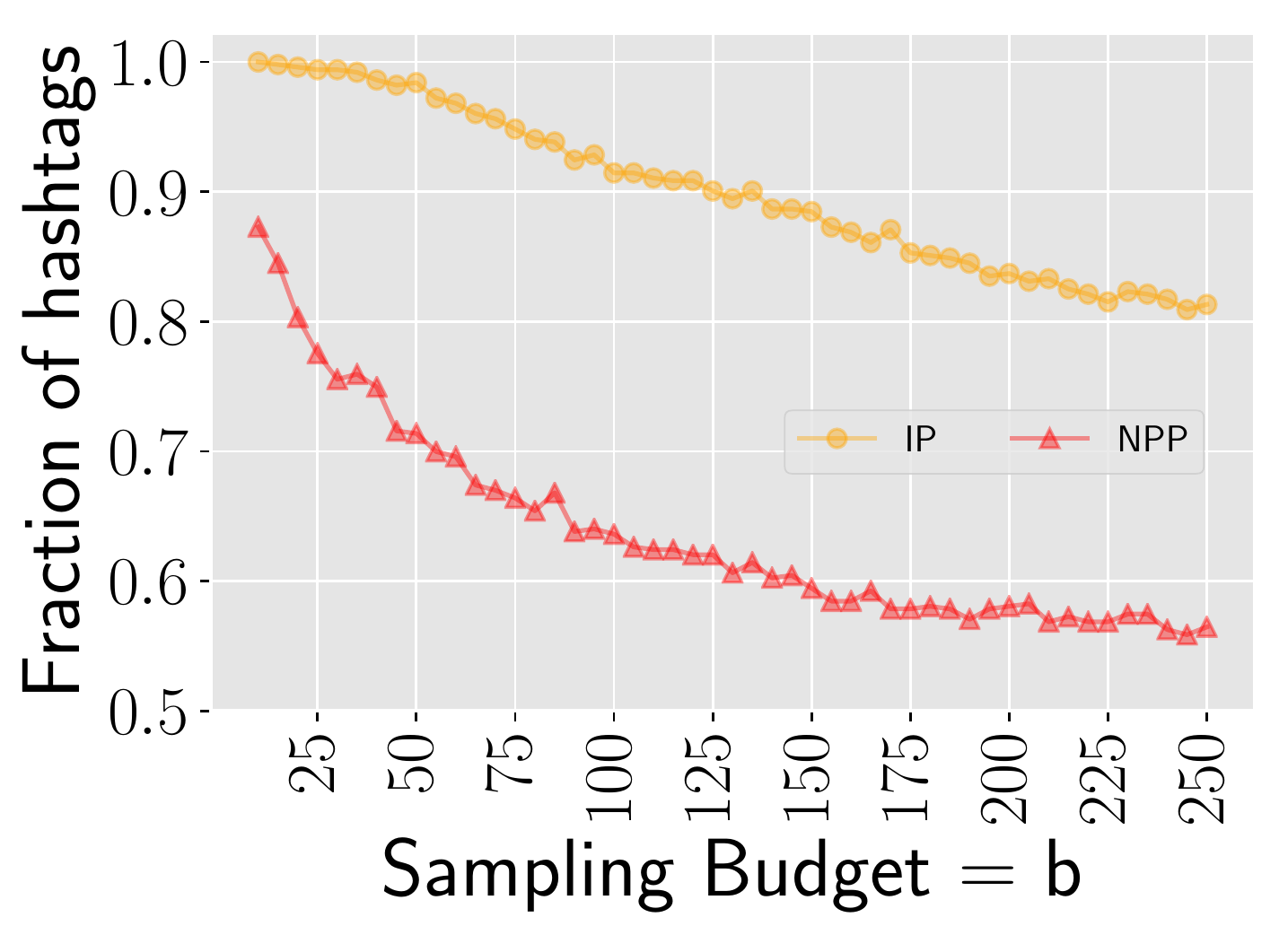}
        \caption{\label{fig:polling_better_ratio}}
    \end{subfigure}
    \caption{Comparison of polling algorithms for estimating the global prevalence of Twitter hashtags. Variation of (a) squared bias ( $\bias\{T\}^2$ ), (b) variance ( $\var\{T\}$ ) and (c) mean squared error ( $\bias\{T\}^2 + \var\{T\}$ ) of the polling estimate (\textit{IP}, \textit{NPP} and \textit{FPP} as $T$ - polling algorithm -) as a function of a hashtag's global prevalence $\mathbb{E}\{f(X)\}$. Each point represents a different hashtag and a fixed sampling budget $b = 25$.
    (d) Fraction of hashtags where the proposed FPP algorithm outperforms the other two in terms of mean squared error. The fraction for NPP approaches 0.5, and for IP approaches 0.8 as sampling budget $b$ increases. \label{fig:polling} These figures illustrate that the proposed FPP algorithm achieves a bias-variance trade-off by coupling perception polling with friendship paradox to reduce the mean squared error.}
\end{figure}

The FPP algorithm assumes that
every node has a non-zero in-degree and out-degree.
To evaluate the polling algorithm, we extract a subgraph of 5409 Twitter users from our dataset with the same properties. We use the polling algorithm to estimate the popularity of the 500 most frequent hashtags mentioned by users in this subgraph. 
We compare the performance of the proposed FPP algorithm on this induced subgraph to two alternative algorithms:
\begin{enumerate}
\item \textbf{Intent Polling - IP}: asks random users whether they used a hashtag (\textit{orange in Figure \ref{fig:polling}}).
\item \textbf{Node Perception Polling - NPP}: asks random users what fraction of their friends used the hashtag (\textit{red in Figure \ref{fig:polling}}).
\item \textbf{Follower Perception Polling - FPP}: asks random followers what fraction of their friends used the hashtag (\textit{green in Figure \ref{fig:polling}}).
\end{enumerate}

Node perception polling (NPP) differs from IP in terms of the questions asked: random nodes are asked about their perception in NPP, whereas they are asked about their attribute in IP. Follower perception polling (FPP) differs from NPP in terms of the sampling method: random followers are sampled (based on friendship paradox) in FPP whereas naive sampling of random nodes is used in NPP. Hence, comparing performance of IP with NPP will illustrate the effect of polling perceptions instead of attributes. Comparing performance of FPP with NPP will illustrate the effect of friendship paradox based perception polling in contrast to the naive sampling based perception polling.

Figure \ref{fig:bias_polling} shows the (empirical) squared bias of the three polling algorithms for a fixed sampling budget $b = 25$, which corresponds to querying $0.5\%$ of the nodes. As asserted in Theorem~\ref{th:bias_algFPP} (see Method), for each hashtag, the \textit{FPP} estimate is biased by an amount equal to $\Bglobal$ value for that hashtag. Hence, the \textit{IP}, which yields an unbiased estimate, outperforms \textit{NPP} and \textit{FPP} in terms of bias for most hashtags as illustrated in Figure~\ref{fig:bias_polling}. However, as illustrated in Figure \ref{fig:var_polling}, \textit{FPP} produces a smaller variance estimate compared to both \textit{IP} and \textit{NPP}. Hence, in terms of the \textit{Mean Squared Error} (which is defined as $\mse\{T\} = 
 \bias\{T\}^2 + \var\{T\}$ for an estimate $T$), FPP estimate is more accurate compared to both IP and NPP estimates for most hashtags as illustrated in Figure~\ref{fig:mse_polling}.
Increasing the sampling budget decreases performance gap between \textit{FPP} and the other two algorithms (Figure \ref{fig:polling_better_ratio}). However, even with $b = 250$ ($5\%$ of the nodes polled), \textit{FPP} outperforms \textit{IP} in more than $80\%$ of the cases, and it outperforms \textit{NPP} in more than $55\%$ of the cases.




\section{Discussion}
\label{sec:conclusion}
Social networks can have surprising, even counter-intuitive behaviors. For example, previous work has shown that the ``majority illusion'' may lead people to observe that the majority of their friends has some attribute, even when it is globally rare~\cite{Lerman2016majority}. The illusion is created by the friendship paradox, which can also bias the observations individuals make in directed networks in non-obvious ways. Our analysis identifies the conditions under which
%
%
friendship paradox can distort how popular some attribute or behavior (e.g., drinking, smoking, etc.) is perceived to be, making it appear several times more prevalent than it actually is. Specifically, the following two conditions amplify local perception bias: (1) positive correlation between an individual's attribute and popularity (number of followers in a directed network) and (2) positive correlation between the attributes of individuals and the attention of their followers. The first condition suggests that bias exists when popular people (i.e., those followed by many others and hence more visible) have the attribute, for example, engaging in risky behavior, having a specific political affiliation, or simply using a particular hashtag. Their influence is amplified when they are followed or seen by ``good listeners'', i.e., people who follow fewer others and thus are able to pay more attention to the influentials.

We validated these findings empirically using data from the Twitter social network. We measured perceptions of popularity of hashtags, i.e., words or phrases preceded by a `\#' sign that are frequently used to identify topics on Twitter. Such hashtags serve many important functions, from organizing content, to expressing opinions, to linking topics and people.  We measured a hashtag's global prevalence as the fraction of all people using it, and its perceived popularity as the fraction of friends using it.
Our analysis identified hashtags that appeared several times more popular  than they actually were, due to local perception bias. Such hashtags were associated with social movements, memes and current events.
Interestingly, as our data was collected in 2014, some of the most biased hashtags were \textit{\#icebucketchallenge} and \textit{\#alsicebucketchallenge}, the explosively popular Ice Bucket Challenge.
Perception bias could have potentially amplified their spread, as well as the spread of other costly behaviors that require social proof~\cite{cheng2018diffusion}. For example, the \#MeToo movement has grown into an international campaign to end sexual harassment and assault in the workplace by highlighting just how endemic the problem is. It spread through online social networks as women posted their own stories of harassment using the hashtag \textit{\#metoo}. Perception bias may have amplified the spread of such hashtags by making them appear more common and thus easier to use.


We also presented an algorithm that leverages friendship paradox in directed networks to efficiently (in a mean-squared error sense) estimate the true prevalence of an attribute. In essence, the idea behind the algorithm is that perceptions of random followers should have a smaller variance compared to the perceptions of random individuals. This is because random followers are more informed than random individuals (according to friendship paradox). It was shown that the variance of this algorithm is bounded by a function of the second largest eigenvalue of the degree-discounted bibliographic coupling matrix and the correlation between the out-degree and the attribute. Empirical results illustrate that the proposed algorithm outperforms other widely used polling algorithms.

Our work suggests that one way to mitigate perception bias is to alter the local network topology to allow more information to reach the low-attention users. {This opens up new research avenues on how link recommendation can alleviate perception bias.}
However, our empirical study has limitations, namely, the nature of the subsample of the network we studied.
Social networks are huge, 
necessitating analysis of subgraphs sampled from the entire network. However, by leaving out some nodes, data collection process itself may distort the properties of the sample. Specifically, since we observed only the outgoing links from the seed nodes, we do not have information about the followers of these nodes. Addressing the limitations of analysis imposed by sampling is an important research direction.
Despite this limitation, our work shows that friendship paradox can lead to surprising biases, especially in directed networks, and suggests potential strategies for mitigating these biases.

\section{Methods}

\label{sec:empirical}

\subsection{Data}
\label{subsecc:data}
The dataset used in this study was collected from Twitter in 2014. We started with a set of 100 users who were active discussing ballot initiatives during the 2012 California election and expanded this set by retrieving the accounts of the individuals they followed and reached a total of 5,599 users. We refer these individuals as \textit{seed users}. Next, we identified all friends of the seed users, collecting all directed links that start with one of the seed users. We then collected all posts made by the seed users and their friends---over 600K users in total---over the period June--November 2014. The posts include their activity i.e. tweets and retweets. These tweets mention more than 18M hashtags.  With this data-collection approach, seed users are fully observed (their activity and what they see in their social feeds), and their friends are only partially observed (only their activity).

Table \ref{tbl:statistics} reports properties of the Twitter dataset, considering only the \textit{seed users} and using the variables defined in Section~\ref{subsec:four_versions_of_DFP}. Note that the average degree $\bar{d}$ (where, $\bar{d} = \mathbb{E}\{\od(X)\} = \mathbb{E}\{\id(X)\}$) is relatively large at $123.55$. However, since the distribution of the in- and out-degree is highly heterogeneous, the variance of the in- and out-degrees is relatively large (two orders of magnitude compared to $\bar{d}$).
The covariance between the in- and out-degrees of nodes is also relatively large with a correlation coefficient $\rho\{\id(X),\od(X)\} = \cov \{\id(X),\od(X)\}/\sqrt{\var\{\od(X)\} \var\{\id(X)\}} = 0.52$.

\begin{table}
  \centering
  \caption{Properties of the Twitter subgraph (Sec. \ref{subsecc:data})}\label{tbl:statistics}
\resizebox{0.95\columnwidth}{!}{%
\begin{tabular}{|c|c|c|}
  \hline
  \multicolumn{3}{|c|}{\emph{Properties of nodes} } \\
  \hline
  avg. degree &  $\avgdegree=\mathbb{E}\{d_{i}(X)\}$ & $123.55$ \\
  \hline
variance of out-degree &  $\var\{\od(X)\}$ & $30096.16$ \\
  \hline
variance of in-degree &  $\var\{\id(X)\}$ & $24338.66$ \\
  \hline
covariance   &   $\cov\{\id(X),\od(X)\}$ & $14226.32$ \\
  \hline
  \multicolumn{3}{|c|}{\emph{Properties of friends and followers} } \\
  \hline
friend's avg. out-degree &  $\mathbb{E}\{\od(Y)\}$  & $367.14$ \\
  \hline
friend's  avg. in-degree &  $\mathbb{E}\{\id(Y)\}$  & $238.68$  \\
  \hline
follower's avg. in-degree &  $\mathbb{E}\{\id(Z)\}$  & $320.54$  \\
  \hline
follower's avg. out-degree &  $\mathbb{E}\{\od(Z)\}$ & $238.68$  \\
  \hline
\end{tabular}}
\end{table}
Due to the relatively large variance (compared to $\bar{d}$) of the in- and out-degree distributions, the expected out-degree of a random friend ($\mathbb{E}\{\od(Y)\}$) and the expected in-degree of a random follower ($\mathbb{E}\{\id(Z)\}$) are larger than the average degree $\avgdegree$ as stated in Theorem \ref{th:friendship_paradox_any_network}. Note also that, due to positive covariance between the in- and out-degrees of nodes, the expected in-degree of a random friend ($\mathbb{E}\{\id(Y)\}$) and the expected out-degree of a random follower ($\mathbb{E}\{\od(Z)\}$) are also larger than $\avgdegree$, as stated in Theorem \ref{th:friendship_paradox_positivelyCorrelated_inout_degree_network}.

\subsection{Friendship Paradox-based Polling: Performance Analysis}
\label{subsec:unbiased_estimate}
The accuracy of a poll depends on the method of sampling respondents and the question asked of them.  For example, in the case of estimating an election outcome, asking people \textit{``Who do you think will win?"} (expectation polling) is better than \textit{``Who will you vote for?"} (intent polling)~\cite{rothschild2011forecasting}. This is because in expectation polling, an individual names the candidate more popular among her friends, thus summarizing a number of individuals in the social network, rather that provide her own voting intention. 
Our polling algorithm is motivated by \cite{dasgupta2012social,rothschild2011forecasting,nettasinghe2018your}, which show that polling methods asking individuals to summarize information in their neighborhood outperform polling methods that ask only about the attribute of each individual. {\cite{dasgupta2012social} studied the polling problem analytically in the context of an undirected network and, proposed a method to obtain an unbiased estimate of the global prevalence with bounds on its variance. }The analysis of Algorithm 1 
for directed graphs is motivated by these results in \cite{dasgupta2012social} for undirected social networks.  \cite{nettasinghe2018your} proposed to ask the simple question ``What fraction of your neighbors have the attribute 1?" (neighborhood expectation polling) from randomly sampled neighbors (instead of random nodes) on undirected social networks. In this case, sampled individuals will provide the average opinion among their neighbors. Further, since random friends have more friends than random individuals (by the friendship paradox for undirected graphs), this approach would yield an estimate with a smaller variance than asking it from random nodes. Motivated by these works, Algorithm 1 exploits the friendship paradox on directed networks to obtain a statistically efficient estimate of the global prevalence of an attribute using biased perceptions of random followers.

\paragraph{Analysis of the FPP Algorithm}
Recall that in order to reduce the variance, the FPP algorithm polls perceptions $q_f(Z)$ of random followers $Z$ instead of attributes $f(X)$ of random individuals $X$. However, it is not guaranteed that the estimate $\FPPest$ will be unbiased. The following result shows that the bias of the FPP algorithm is the same as the global perception bias $\Bglobal$.
\begin{theorem}
	\label{th:bias_algFPP}
	The bias of the estimate $\FPPest$ computed in Algorithm~1 is equal to the global perception bias $\Bglobal$ i.e.
	\begin{align}
	\bias(\FPPest) &= \mathbb{E}\{\FPPest\} - \mathbb{E}\{f(X)\}\\
	&= \Bglobal
	\end{align}
\end{theorem}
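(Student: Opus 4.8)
The plan is to reduce the statement to one counting identity. First I would observe that the set $S$ produced in step~1 of Algorithm~\ref{alg:FPP} consists of $b$ nodes drawn (independently, with replacement) from exactly the distribution of the random follower $Z$ in Eq.~\eqref{eq:distribution_Z}. Hence every summand $q_f(v)$ with $v\in S$ has the law of $q_f(Z)$, and by linearity of expectation — independence is not even needed —
\begin{equation*}
\mathbb{E}\{\FPPest\} \;=\; \frac{1}{b}\sum_{v\in S}\mathbb{E}\{q_f(v)\} \;=\; \mathbb{E}\{q_f(Z)\}.
\end{equation*}
So it suffices to prove $\mathbb{E}\{q_f(Z)\} = \mathbb{E}\{f(Y)\}$, since then $\bias(\FPPest) = \mathbb{E}\{q_f(Z)\} - \mathbb{E}\{f(X)\} = \mathbb{E}\{f(Y)\} - \mathbb{E}\{f(X)\} = \Bglobal$ by Eq.~\eqref{eq:global_perception_bias_Y}.

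Second, I would expand $\mathbb{E}\{q_f(Z)\}$ directly from the definitions. Using $\mathbb{P}(Z=v)=\id(v)/\sum_{v'\in V}\id(v')$ and $q_f(v)=\big(\sum_{u\in Fr(v)}f(u)\big)/\id(v)$, the factor $\id(v)$ cancels, leaving
\begin{equation*}
\mathbb{E}\{q_f(Z)\} \;=\; \frac{1}{\sum_{v'\in V}\id(v')}\,\sum_{v\in V}\sum_{u\in Fr(v)}f(u).
\end{equation*}
The key step is to re-index the double sum by directed links: $u\in Fr(v)$ exactly when $(u,v)\in E$, so $\sum_{v\in V}\sum_{u\in Fr(v)}f(u)=\sum_{(u,v)\in E}f(u)=\sum_{u\in V}\od(u)\,f(u)$, the last equality because each source node $u$ is counted once per outgoing link, i.e.\ $\od(u)$ times. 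Combining this with the elementary fact $\sum_{v'\in V}\id(v')=\sum_{v'\in V}\od(v')=N\avgdegree=|E|$ gives $\mathbb{E}\{q_f(Z)\}=\big(\sum_{u}\od(u)f(u)\big)/\big(\sum_{v'}\od(v')\big)$, which is precisely $\mathbb{E}\{f(Y)\}$ by Eq.~\eqref{eq:distribution_Y}. Together with the first step this yields $\bias(\FPPest)=\Bglobal$.

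I do not expect a genuine obstacle here; the only delicate point is the bookkeeping in the re-indexing step — keeping straight that $Fr(v)$ is the set of in-neighbours of $v$, that summing $f(u)$ over all (friend, follower) pairs weights each source node $u$ by its out-degree $\od(u)$, and that $\sum_v\id(v)$ and $\sum_v\od(v)$ both equal the number of links. A slightly slicker packaging of the same computation: sampling $Z$ proportional to in-degree and then a uniformly random friend $U$ of $Z$ produces a uniformly random directed link $(U,Z)\sim\uniform(E)$, so $\mathbb{E}\{q_f(Z)\}=\mathbb{E}\{f(U)\mid (U,Z)\sim\uniform(E)\}$; since the source endpoint of a uniformly random link has the law of the random friend $Y$, this expectation equals $\mathbb{E}\{f(Y)\}$. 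Either route gives the claim.
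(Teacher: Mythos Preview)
Your proposal is correct and follows essentially the same approach as the paper's proof: both compute $\mathbb{E}\{\FPPest\}=\mathbb{E}\{q_f(Z)\}$, observe that the in-degree weight from the sampling distribution cancels the $1/\id(v)$ in $q_f(v)$, and then recognize the resulting edge sum as $\sum_u \od(u)f(u)/M=\mathbb{E}\{f(Y)\}$. The only cosmetic difference is that the paper packages the same cancellation in matrix notation ($\mathds{1}^T D_i D_i^{-1} A^T f = \mathds{1}^T A^T f$) rather than your elementwise re-indexing over links.
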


Hence, the same factors (specified in Eq. (\ref{eq:global_perception_bias_Y})) that increase (decrease) the global perception bias will increase (decrease) the bias of the estimate $\FPPest$ produced by the FPP algorithm. The aim of the FPP algorithm is to compensate for the bias $\Bglobal$ of the algorithm with a reduced variance and thereby achieve a smaller mean squared error. Also, we highlight that the Algorithm 1 can be modified to generate an unbiased estimate by replacing (\ref{eq:NEP_estimate}) with
	\begin{align}
	\label{eq:modified_unabiased_FPPest}
	\hat{f}_{FPP}^{\text{Unbiased}} = \frac{1}{b} \sum_{ v \in S} \frac{1}{Np_v}\sum_{ u \in {Fr}(v)}\frac{f(u)}{\od(u)}.
	\end{align}  
{The unbiased estimate $\hat{f}_{FPP}^{\text{Unbiased}} $ is based on the concept of \textit{social sampling} proposed in \cite{dasgupta2012social} for undirected social networks where, queried individuals provide a weighted value of their friends' attributes in a manner that results in an unbiased estimate.} This estimate is useful in contexts where unbiasedness is preferred over mean-squared error to assess the performance of the estimate. However, this does not result in an intuitive and easily implementable algorithm similar to Algorithm~1 since the modified estimate $\hat{f}_{FPP}^{\text{Unbiased}}$ involves each sampled individual calculating a weighted average of the attributes of her neighbors.

Before analyzing the variance of estimate $\hat{f}$ produced by the Algorithm 1, we digress briefly to review the \emph{bibliographic coupling matrix}. Bibliographic coupling originated from the analysis of citation networks \cite{kessler1963bibliographic}, and is used to symmetrize a directed graph by transform it into an undirected graph for purposes of clustering, etc. The bibliographic coupling matrix $B$ of a directed graph with adjacency matrix $A$ is defined as $B = AA^T$. Hence, the weight of the link between nodes $i, j$ in the new undirected graph is $B(i,j) = \sum_{v \in V}A(i,v)A(j, v)$ which corresponds to the number of mutual followers of $i$ and $j$. 
Hence, the weight of the link between two nodes $i$ and $j$ in $B$ is the number of individuals who follow both of these nodes.\footnote{In a citation network where the nodes correspond to papers, the entry $(i,j)$ of the bibliographic coupling matrix $B$ gives the number of papers that are cited by both $i$ and $j$ from which the name Bibliographic coupling matrix is derived. Bibliographic coupling matrix is also related to the HITS algorithm \cite{kleinberg1999authoritative} used for link analysis \cite{newman2010networks,liu2007web}.} This conveys the similarity of $i, j$ in terms of the number of mutual followers. However, when determining the similarity of two nodes $i, j$ using $B$, a mutual follower with a large number of 
friends (a likely scenario), is weighted the same as a mutual follower with a small number of 
friends (a rarer scenario). Hence, the latter type of mutual follower should be given more weight compared to the former type when evaluating the similarity of two nodes. Similarly, the number of followers of $i$ and $j$ should also be taken into consideration when assessing their similarity. Based on these observations, \cite{satuluri2011symmetrizations} proposed the degree-discounted bibliographic coupling matrix
\begin{equation}
B_d = D_o^{-1/2}AD_i^{-1}A^TD_o^{-1/2}
\end{equation}
where $D_o$ and $D_i$ are the $N \times N$ dimensional diagonal matrices with $D_o(i,i) = \od(i)$ and $D_i(i,i) = \id(i)$, respectively. The $(i, j)$ element of $B_d$ is
\begin{equation}
B_d(i,j) = \frac{1}{\sqrt{\od(i)\od(j)}}\sum_{ k \in V}\frac{A(i,k)A(j,k)}{{\id(k)}},
\end{equation}
which discounts the contributions of the nodes $i, j$ by their out-degrees (the number of followers) and each mutual follower $k$ by her in-degree (number of friends). Please see \cite{satuluri2011symmetrizations,malliaros2013clustering} for more details on the degree-discounted bibliographic coupling.

Returning to the analysis of the estimate $\FPPest$ of the Algorithm 1, the following result gives an upper bound on the variance of this estimate under certain conditions on the structure of the network.

\begin{theorem}
	\label{th:variance_algFPP}
	Consider the estimate $\FPPest$ generated by Algorithm 1 for a graph $G = (V,E)$ with labels $f:V\rightarrow \{0,1\}$. If the degree-discounted bibliographic coupling matrix $B_d$ is connected, non-bipartite, then
	\begin{align}
	\var(\FPPest) &= \frac{ f^TD_o^{1/2}}{bM}\bigg(		D_o^{-1/2}A	D_i^{-1}A^{T}D_o^{-1/2}	 - \frac{D_o^{1/2}\mathds{1}\mathds{1}^{T}D_o^{1/2}}{M}	\bigg)D_o^{1/2}f\\
	&\leq \frac{1}{bM} \lambda_2 || D_o^{1/2}f||^2
	\end{align} where, $M = \sum_{ v \in V}\id(v)$, $\lambda_2$ is the second largest eigenvalue of $B_d$, $f$ is the $N\times1$ dimensional vector of binary attributes.
\end{theorem}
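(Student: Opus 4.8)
The plan is to reduce the variance of $\FPPest$ to the variance of a single sampled perception, rewrite that quantity as a quadratic form in matrix notation, and then read off the upper bound from the spectral decomposition of $B_d$. Since $S$ consists of $b$ independent draws from $p_v = \id(v)/M$ with $M = \sum_v \id(v)$, writing $Z$ for such a random follower makes $\FPPest$ the sample mean of $b$ i.i.d.\ copies of $q_f(Z)$, so $\var(\FPPest) = \tfrac1b\var(q_f(Z)) = \tfrac1b\big(\mathbb{E}\{q_f(Z)^2\}-\mathbb{E}\{q_f(Z)\}^2\big)$. Next I would put the perceptions in matrix form: with adjacency matrix $A$ (so $A(i,j)=1$ iff $i$ is a friend of $j$) one has $\sum_{u\in Fr(v)}f(u) = (A^Tf)(v)$, hence $q_f = D_i^{-1}A^Tf$ as a vector. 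A direct computation then gives $\mathbb{E}\{q_f(Z)\} = \tfrac1M\mathds{1}^TD_iq_f = \tfrac1M\mathds{1}^TA^Tf = \tfrac1Mf^TD_o\mathds{1}$ (using $A^T\mathds{1} = D_i\mathds{1}$, $A\mathds{1} = D_o\mathds{1}$) and $\mathbb{E}\{q_f(Z)^2\} = \tfrac1Mq_f^TD_iq_f = \tfrac1Mf^TAD_i^{-1}A^Tf$. Substituting $g := D_o^{1/2}f$ (legitimate because all degrees are nonzero) and using $M = \sum_v\id(v) = \sum_v\od(v)$, these two moments combine into exactly the claimed equality, the middle matrix being $B_d - \tfrac1M D_o^{1/2}\mathds{1}\mathds{1}^TD_o^{1/2}$.

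The crux is the spectrum of $B_d = D_o^{-1/2}AD_i^{-1}A^TD_o^{-1/2}$. I would note three things: it is symmetric and positive semidefinite, since $B_d = C^TC$ with $C = D_i^{-1/2}A^TD_o^{-1/2}$; it is similar to the row-stochastic matrix $W = D_o^{-1}AD_i^{-1}A^T$ via $B_d = D_o^{1/2}WD_o^{-1/2}$, where row-stochasticity of $W$ follows because $A^T\mathds{1} = D_i\mathds{1}$ makes the inner factors telescope to $W\mathds{1} = \mathds{1}$; and, because $B_d$ (equivalently the coupling graph) is connected and non-bipartite, $W$ is irreducible and aperiodic, so by Perron--Frobenius the eigenvalue $1$ is simple and strictly dominant among the eigenvalues of $W$, hence of $B_d$. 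Together with positive semidefiniteness this gives $1 = \lambda_1 > \lambda_2 \ge \cdots \ge \lambda_N \ge 0$. Since the Perron eigenvector of $W$ is $\mathds{1}$, the unit top eigenvector of the symmetric $B_d$ is $\hat u_1 = D_o^{1/2}\mathds{1}/\sqrt M$ (using $\|D_o^{1/2}\mathds{1}\|^2 = \mathds{1}^TD_o\mathds{1} = M$), so $\hat u_1\hat u_1^T = \tfrac1M D_o^{1/2}\mathds{1}\mathds{1}^TD_o^{1/2}$ is exactly the rank-one term subtracted above.

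The bound then follows mechanically: from the spectral decomposition $B_d = \sum_{i=1}^N\lambda_i\hat u_i\hat u_i^T$, the matrix inside the quadratic form is $B_d - \hat u_1\hat u_1^T = \sum_{i=2}^N\lambda_i\hat u_i\hat u_i^T$, which is symmetric with all eigenvalues in $[0,\lambda_2]$; hence $g^T(B_d-\hat u_1\hat u_1^T)g \le \lambda_2\|g\|^2 = \lambda_2\|D_o^{1/2}f\|^2$, and dividing by $bM$ yields $\var(\FPPest)\le \tfrac1{bM}\lambda_2\|D_o^{1/2}f\|^2$.

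The main obstacle is the second paragraph: getting the spectral bookkeeping exactly right --- identifying the (non-orthogonal) similarity of $B_d$ to a stochastic matrix, checking that the subtracted rank-one matrix is precisely the Perron projector $\hat u_1\hat u_1^T$, and recognizing that the connectedness and non-bipartiteness hypotheses are exactly what make $\lambda_1 = 1$ a simple eigenvalue so that $\lambda_2 < 1$ is a genuine spectral gap. The i.i.d.\ reduction, the matrix identities, and the final eigenvalue estimate are all routine once this is set up.
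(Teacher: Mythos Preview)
Your proposal is correct and follows essentially the same route as the paper: the i.i.d.\ reduction $\var(\FPPest)=\tfrac1b\var(q_f(Z))$, the matrix identities $q_f=D_i^{-1}A^Tf$, $\mathbb{E}\{q_f(Z)\}=\tfrac1M\mathds{1}^TA^Tf$, $\mathbb{E}\{q_f(Z)^2\}=\tfrac1M f^TAD_i^{-1}A^Tf$, and the substitution $g=D_o^{1/2}f$ are exactly what the paper does. The only cosmetic difference is in the final step: the paper bounds the quadratic form by Cauchy--Schwarz against the spectral norm of $B_d-\tfrac1M D_o^{1/2}\mathds{1}\mathds{1}^TD_o^{1/2}$ and then identifies that norm as $\lambda_2$, whereas you go through the spectral decomposition directly; these are equivalent for symmetric PSD matrices. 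Your Perron--Frobenius argument via the similarity $B_d=D_o^{1/2}WD_o^{-1/2}$ with $W$ row-stochastic is in fact more explicit than the paper's justification that $1$ is the top (simple) eigenvalue, so your ``main obstacle'' is already handled more carefully than in the original.
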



Theorem \ref{th:variance_algFPP} shows that the variance of the friendship paradox based polling Algorithm 1 depends on the correlation between the out-degrees and attributes $|| D_o^{1/2}f||^2 $ and the structure of the graph via second largest eigenvalue $\lambda_2$ of the matrix $B_d$. Specifically, a smaller $\lambda_2$ implies that the bibliographic coupling network has a good expansion (i.e. absence of bottlenecks) \cite{estrada2006network}. Hence, if the nodes in the network $G =(V, E)$ cannot be clustered into distinct groups based on their mutual followers (i.e. bibliographic similarity) then, the variance of the algorithm will be smaller (due to smaller $\lambda_2$).

\appendix
\section*{Appendix}
\renewcommand{\thesection}{A\arabic{section}}  
\renewcommand{\theequation}{A\arabic{equation}}   


\section{Proof of Theorem 1}
\label{appendix:th_FP_any_network}
	\begin{theorem*}
		Let $G = (V,E)$ be a directed network. Then,
		\begin{compactenum}
			\item random friend $Y$ has more followers than a random node $X$, on average; i.e.,
			\begin{equation}
\label{eq:dfpout}
				\mathbb{E}\{\od(Y)\} - \avgdegree = \frac{\var\{\od(X)\}}{\avgdegree} \geq 0.
			\end{equation}
			
			\item random follower $Z$ has more friends than a random node $X$, on average; i.e.,
			\begin{equation}
\label{eq:dfpin}
				\mathbb{E}\{\id(Z)\} - \avgdegree = \frac{\var\{\id(X)\}}{\avgdegree} \geq 0.
			\end{equation}
		\end{compactenum}
	\end{theorem*}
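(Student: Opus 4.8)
The plan is to prove both statements by a direct moment computation, since each reduces to recognizing a second moment hidden inside the out-degree-biased (resp.\ in-degree-biased) sampling distribution. I would treat part~1 in full and then note that part~2 is identical with the roles of in- and out-degree swapped.

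First I would compute $\mathbb{E}\{\od(Y)\}$ from the definition in Eq.~\eqref{eq:distribution_Y}:
\begin{equation*}
\mathbb{E}\{\od(Y)\} = \sum_{v\in V} \od(v)\,\mathbb{P}(Y=v) = \sum_{v\in V}\od(v)\cdot\frac{\od(v)}{\sum_{v'\in V}\od(v')} = \frac{\sum_{v\in V}\od(v)^2}{\sum_{v'\in V}\od(v')}.
\end{equation*}
Next I would rewrite numerator and denominator in terms of moments of $\od(X)$ under the uniform distribution of Eq.~\eqref{eq:distribution_X}: the denominator is $\sum_{v}\od(v) = N\avgdegree$, and the numerator is $\sum_v \od(v)^2 = N\,\mathbb{E}\{\od(X)^2\}$. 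Hence $\mathbb{E}\{\od(Y)\} = \mathbb{E}\{\od(X)^2\}/\avgdegree$.

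Then I would subtract $\avgdegree$ and use the identity $\var\{\od(X)\} = \mathbb{E}\{\od(X)^2\} - (\mathbb{E}\{\od(X)\})^2 = \mathbb{E}\{\od(X)^2\} - \avgdegree^2$ (valid because $\mathbb{E}\{\od(X)\} = \avgdegree$), obtaining
\begin{equation*}
\mathbb{E}\{\od(Y)\} - \avgdegree = \frac{\mathbb{E}\{\od(X)^2\} - \avgdegree^2}{\avgdegree} = \frac{\var\{\od(X)\}}{\avgdegree},
\end{equation*}
which is nonnegative since variance is nonnegative and $\avgdegree>0$ for any network with at least one edge. For part~2, the same argument applied to Eq.~\eqref{eq:distribution_Z} with $\id$ in place of $\od$ gives $\mathbb{E}\{\id(Z)\} = \mathbb{E}\{\id(X)^2\}/\avgdegree$ (here using that $\sum_v \id(v) = \sum_v \od(v) = N\avgdegree$, as noted in the excerpt), and subtracting $\avgdegree$ yields $\var\{\id(X)\}/\avgdegree \geq 0$.

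There is no real obstacle here: the only thing to be careful about is the bookkeeping that $\sum_{v}\od(v) = \sum_{v}\id(v) = N\avgdegree$, so that both the out-degree-biased and in-degree-biased sampling weights share the same normalizing constant $N\avgdegree$, and the mild nondegeneracy assumption $\avgdegree>0$ needed to divide. Everything else is a one-line rearrangement of the definition of variance.
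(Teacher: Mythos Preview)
Your proof is correct and follows essentially the same approach as the paper: both compute $\mathbb{E}\{\od(Y)\}$ directly from the out-degree-biased sampling distribution, recognize the resulting ratio as $\mathbb{E}\{\od(X)^2\}/\avgdegree$, and then invoke the variance identity; part~2 is dispatched symmetrically in both cases.
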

\begin{proof}

\noindent
Part 1:

	$\hspace{0cm} \mathbb{E}\{\od(Y)\} - \mathbb{E}\{\od(X)\} = \sum_{v\in V} \od(v)\mathbb{P}(Y = v)- \sum_{v\in V}\frac{\od(v)}{N}$
	\begin{align}
	&\hspace{0cm}=\sum_{v\in V}\od(v)\frac{\od(v)}{\sum_{v'\in V}\od(v')} - \frac{\sum_{v\in V}\od(v)}{N}\\
	&\hspace{0cm}=  \frac{\frac{\sum_{v\in V}\od(v)^2}{N} - \bigg(\frac{\sum_{v\in V}\od(v)}{N}\bigg)^2}{\frac{\sum_{v'\in V}\od(v')}{N}}\\
	&\hspace{0cm}=\frac{\mathbb{E}\{\od(X)^2\} - \mathbb{E}\{\od(X)\}^2}{\mathbb{E}\{\od(X)\}} = \frac{\var\{\od(X)\}}{\avgdegree} \geq 0
	\end{align}
\noindent
Proof of part 2 follows using similar arguments.
\end{proof}

\section{Proof of Theorem 2}
\label{appendix:th_FP_postivelyCorrelate_inout_degree}

	\begin{theorem*}
		Let $G = (V,E)$ be a directed network where in-degree $\id(X)$ and out-degree $\od(X)$ of a random node $X$ are positively correlated. Then,
		\begin{compactenum}
			\item random friend $Y$ has more friends than a random node $X$ does, on average; i.e.,
			\begin{equation}
				\mathbb{E}\{\id(Y)\} - \avgdegree = \frac{\cov\{\id(X),\od(X)\}}{\avgdegree} \geq 0.
			\end{equation}
			
			\item random follower $Z$ has more followers than a random node $X$ does, on average; i.e.,
			\begin{equation}
				\mathbb{E}\{\od(Z)\} - \avgdegree = \frac{\cov\{\id(X),\od(X)\}}{\avgdegree} \geq 0.
			\end{equation}
		\end{compactenum}
	\end{theorem*}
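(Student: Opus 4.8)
The plan is to run the same elementary computation used for Theorem~\ref{th:friendship_paradox_any_network}, but now pairing a node's \emph{in}-degree with the \emph{out}-degree sampling weight of $Y$ (and symmetrically for $Z$), so that the cross term that appears is a covariance rather than a variance. For Part~1, I would start directly from the definition of the random friend distribution in \eqref{eq:distribution_Y}: since $Y$ is sampled proportional to out-degree,
\begin{equation}
\mathbb{E}\{\id(Y)\} = \sum_{v\in V}\id(v)\,\mathbb{P}(Y=v) = \frac{\sum_{v\in V}\id(v)\od(v)}{\sum_{v'\in V}\od(v')}.
\end{equation}
Dividing numerator and denominator by $N$ rewrites the right-hand side as $\mathbb{E}\{\id(X)\od(X)\}/\mathbb{E}\{\od(X)\}$.

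The key identity to invoke next is the one recorded just before Section~\ref{subsec:four_versions_of_DFP}: the total in-degree and total out-degree of a directed graph both count $|E|$, hence $\mathbb{E}\{\id(X)\} = \mathbb{E}\{\od(X)\} = \avgdegree$. Using this I can write $\avgdegree = \mathbb{E}\{\id(X)\}\mathbb{E}\{\od(X)\}/\avgdegree$ and subtract, obtaining
\begin{equation}
\mathbb{E}\{\id(Y)\} - \avgdegree = \frac{\mathbb{E}\{\id(X)\od(X)\} - \mathbb{E}\{\id(X)\}\mathbb{E}\{\od(X)\}}{\avgdegree} = \frac{\cov\{\id(X),\od(X)\}}{\avgdegree}.
\end{equation}
Nonnegativity is then immediate from the hypothesis that $\id(X)$ and $\od(X)$ are positively correlated, which forces $\cov\{\id(X),\od(X)\}\ge 0$ (and $\avgdegree>0$).

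Part~2 is the mirror image: $Z$ is sampled proportional to in-degree \eqref{eq:distribution_Z}, so $\mathbb{E}\{\od(Z)\} = \sum_{v}\od(v)\id(v)\big/\sum_{v'}\id(v') = \mathbb{E}\{\od(X)\id(X)\}/\avgdegree$, and the same subtraction (again using $\mathbb{E}\{\id(X)\}=\mathbb{E}\{\od(X)\}=\avgdegree$) yields $\mathbb{E}\{\od(Z)\}-\avgdegree = \cov\{\id(X),\od(X)\}/\avgdegree \ge 0$. There is no genuine obstacle in this proof — each line is a one-step manipulation of finite sums; the only point worth flagging explicitly is the handshake-type identity $\mathbb{E}\{\id(X)\}=\mathbb{E}\{\od(X)\}$, since it is exactly what makes the cross term collapse cleanly to a covariance instead of a messier expression, and it is also where the positive-correlation hypothesis enters (without it the difference is still a covariance but need not be signed).
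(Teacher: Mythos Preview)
Your proposal is correct and follows essentially the same approach as the paper: both compute $\mathbb{E}\{\id(Y)\}$ directly from the out-degree-weighted sampling distribution, divide through by $N$ to rewrite the ratio as $\mathbb{E}\{\id(X)\od(X)\}/\avgdegree$, subtract $\avgdegree$ using $\mathbb{E}\{\id(X)\}=\mathbb{E}\{\od(X)\}=\avgdegree$, and identify the numerator as $\cov\{\id(X),\od(X)\}$; Part~2 is then handled by symmetry in both treatments.
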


\begin{proof}
\noindent
Part 1:

	$\mathbb{E}\{\id(Y)\} - \mathbb{E}\{\id(X)\} = \sum_{v\in V} \id(v)\mathbb{P}(Y = v)- \sum_{v\in V}\frac{\id(v)}{N}$
	\begin{align}
	&\hspace{0.5cm}=\sum_{v\in V}\id(v)\frac{\od(v)}{\sum_{v'\in V}\od(v')} - \frac{\sum_{v\in V}\id(v)}{N}\\
	&\hspace{0.5cm}=  \frac{\frac{\sum_{v\in V}\id(v)\od(v)}{N} - \bigg(\frac{\sum_{v\in V}\id(v)}{N}\bigg)\bigg(\frac{\sum_{v'\in V}\od(v')}{N}\bigg)}{\frac{\sum_{v'\in V}\od(v')}{N}}\\
	&\hspace{0.5cm}=\frac{\mathbb{E}(\id(X)\od(X)) - \mathbb{E}\{\id(X)\}\mathbb{E}\{\od(X)\}}{\mathbb{E}\{\od(X)\}} = \frac{\cov\{\id(X),\od(X)\}}{\avgdegree}
	\end{align}
Hence, positive correlation ($\cov\{\id(X),\od(X)\} > 0$) between in-degree $\id(X)$ and out-degree $\od(X)$ of a random individual $X$ implies that
$\mathbb{E}\{\id(Y)\} > \mathbb{E}\{\id(X)\}$.

Proof of part 2 follows using similar arguments.
\end{proof}


\section{Derivation of $\Blocal$}
\label{appendix:th_local_PB}
Let $Y'$ denote a uniformly sampled friend of a random node $X$. Further, let $A_{uv}$ denote the element $(u,v)$ of the adjacency matrix of network: $A_{uv} = 1$ if there is a link pointing from $u$ to $v$ and $A_{uv} = 0$ otherwise. Then, by definition of the function $q_f$ in Section 2 of the main text, 
\begin{align}
q_f(X) = \frac{\sum_{ U \in {Fr}(X)}f(U)}{\id (X)} = \mathbb{E}\{f(Y')\vert X\}
\end{align}
Therefore,
\begin{equation}
\mathbb{E}\{q_f(X)\}= \frac{1}{N}\sum_{v\in V}\Bigg\{  \frac{\sum_{ u \in {Fr}(v)}f(u)}{\id(v)}\Bigg\} = \frac{1}{N}\sum_{v\in V}\Bigg\{  \sum_{u \in v}\frac{f(u)}{\id(v)}A_{uv}\Bigg\}
\end{equation}
\begin{align}
&= \frac{{\sum_{u,v \in V}A_{uv}}}{N}\sum_{v\in V}\Bigg\{  \sum_{u \in V}\frac{f(u)}{\id(v)}\frac{A_{uv}}{\sum_{u,v \in V}A_{uv}}\Bigg\}\\
&=\avgdegree\times\mathbb{E}\bigg\{ \frac{f(U)}{\id(V)}\bigg\vert (U,V) \sim \uniform(E)   \bigg\}
\end{align}
which proves the first statement.

Next, assume, $f(U)$ and $\mathcal{A}(V)$ (where, $(U,V)$ is a random link) are positively correlated ($\cov\{f(U), \mathcal{A}(V)\} \geq 0$). Then,
\begin{align}
\mathbb{E}\{q_f(X)\}&=\avgdegree\mathbb{E}\Big\{ {f(U)}{\mathcal{A}(V)}\Big\vert (U,V) \sim \uniform(E)   \Big\}\\
&\geq \avgdegree \mathbb{E}\{f(U) \vert (U,V) \sim \uniform(E)   \}\\
&\hspace{1cm}\times \mathbb{E}\{\mathcal{A}(V) \vert (U,V) \sim \uniform(E)   \}\nonumber\\
&= \mathbb{E}\{f(Y)\}
\end{align}

Therefore, $\cov\{f(U), \mathcal{A}(V)\} \geq 0$ (condition (14) in the main text
) implies $\mathbb{E}\{q_f(X)\} \geq \mathbb{E}\{f(Y)\}$.
Also, from Theorem 2,
${\cov\{f(X),\od(X)\} \geq 0}$ (condition  (13) in the main text)
implies $\mathbb{E}\{f(Y)\} \geq \mathbb{E}\{f(X)\}$.
The proof follows.

\section{Proof of Theorem 3}
\label{appendix:th_bias_algFPP}

\begin{theorem*}
	The bias of the estimate $\FPPest$ computed in Algorithm~1 is equal to the global perception bias $\Bglobal$ i.e.
	\begin{align}
	\bias(\FPPest) &= \mathbb{E}\{\FPPest\} - \mathbb{E}\{f(X)\}\\
	&= \Bglobal
	\end{align}
\end{theorem*}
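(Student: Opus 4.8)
The plan is to reduce the claim to the single identity $\mathbb{E}\{q_f(Z)\} = \mathbb{E}\{f(Y)\}$, where $Z$ is a random follower and $Y$ a random friend. First I would use linearity of expectation: since $S$ consists of $b$ nodes drawn i.i.d.\ from the distribution $p_v = \id(v)/\sum_{v'\in V}\id(v')$, which is exactly the distribution of the random follower $Z$ in \eqref{eq:distribution_Z}, we get
\begin{align}
\mathbb{E}\{\FPPest\} = \frac{1}{b}\sum_{v\in S}\mathbb{E}\{q_f(v)\} = \mathbb{E}\{q_f(Z)\}.
\end{align}
(Sampling with replacement means no dependence issues arise — linearity is all that is needed.) Hence $\bias(\FPPest) = \mathbb{E}\{q_f(Z)\} - \mathbb{E}\{f(X)\}$, and it remains to identify the first term with $\mathbb{E}\{f(Y)\}$.

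Next I would expand $\mathbb{E}\{q_f(Z)\}$ using the definition \eqref{eq:defn_qf} of $q_f$. Writing $M = \sum_{v'\in V}\id(v')$,
\begin{align}
\mathbb{E}\{q_f(Z)\} = \sum_{v\in V}\frac{\id(v)}{M}\cdot\frac{\sum_{u\in Fr(v)}f(u)}{\id(v)} = \frac{1}{M}\sum_{v\in V}\sum_{u\in Fr(v)}f(u),
\end{align}
so the $\id(v)$ factors cancel. The key combinatorial step is to reorganize the double sum by the friend $u$ rather than the follower $v$: each node $u$ is counted once for every node that follows it, i.e.\ exactly $\od(u)$ times, giving $\sum_{v\in V}\sum_{u\in Fr(v)}f(u) = \sum_{u\in V}\od(u)f(u)$. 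Therefore $\mathbb{E}\{q_f(Z)\} = \frac{1}{M}\sum_{u\in V}\od(u)f(u)$.

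Finally I would compare this with $\mathbb{E}\{f(Y)\} = \sum_{v\in V}\frac{\od(v)}{\sum_{v'}\od(v')}f(v)$ and invoke the fact, noted in Section~\ref{sec:results}, that $\sum_{v'\in V}\id(v') = \sum_{v'\in V}\od(v') = N\avgdegree = M$ (both equal the number of edges). The two expressions then coincide, so $\mathbb{E}\{q_f(Z)\} = \mathbb{E}\{f(Y)\}$, and
\begin{align}
\bias(\FPPest) = \mathbb{E}\{f(Y)\} - \mathbb{E}\{f(X)\} = \Bglobal
\end{align}
by the definition \eqref{eq:global_perception_bias_Y}. There is no serious obstacle here — the only point demanding care is the change of summation order $\sum_{v}\sum_{u\in Fr(v)} = \sum_{u}\od(u)(\cdot)$ and the bookkeeping that the normalizing constant $M$ is simultaneously the total in-degree and total out-degree; everything else is linearity of expectation and substitution of definitions.
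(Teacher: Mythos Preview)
Your proposal is correct and follows essentially the same approach as the paper: reduce to $\mathbb{E}\{q_f(Z)\}$, cancel the $\id(v)$ from the sampling weight against the $1/\id(v)$ in $q_f$, and reinterpret the resulting edge-sum as the out-degree--weighted average $\mathbb{E}\{f(Y)\}$. The only difference is cosmetic: the paper carries out the same cancellation in matrix notation ($\mathds{1}^T D_i D_i^{-1} A^T f = \mathds{1}^T A^T f$), whereas you do it with explicit double sums and the swap $\sum_v\sum_{u\in Fr(v)} = \sum_u \od(u)(\cdot)$.
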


\begin{proof}

Let $e_v$ denote the $n\times 1$ dimensional unit vector with $1$ at the $v^{th}$ element and zeros elsewhere. Then,
\begin{align}
\label{eq:q_f_v}
q_f(v) =e_v^TD_i^{-1}A^Tf
\end{align}
and let $M = \sum_{v \in V} {\id(v)}$. With $Z$ defined in Equation (3) of the main text, 
\begin{align}
\mathbb{E}\{\FPPest\} &= \mathbb{E}\{q_f(Z)\} = \sum_{v \in V} \frac{\id(v)}{M}q_f(v)\\
&=  \sum_{v \in V} \frac{\id(v)}{M}\bigg(e_v^TD_i^{-1}A^Tf\bigg)= \frac{1}{M}\mathds{1}^TD_iD_i^{-1}A^Tf \\
&= \frac{1}{M}\mathds{1}^TA^Tf \label{eq:exp_q_Z}\\
&= \sum_{v\in V}f(v)\frac{\od(v)}{M} = \mathbb{E}\{f(Y)\}
\end{align}
Therefore,
\begin{align}
\bias\{\FPPest\} &= \mathbb{E}\{\FPPest\} - \mathbb{E}\{f(X)\}\\
&= \mathbb{E}\{f(Y)\} - \mathbb{E}\{f(X)\} = \Bglobal
\end{align}
\end{proof}

\section{Proof of Theorem 4}
\label{appendix:th_variance_algFPP}

\begin{theorem*}
	Consider the estimate $\FPPest$ generated by Algorithm 1 for a graph $G = (V,E)$ with labels $f:V\rightarrow \{0,1\}$. If the degree-discounted bibliographic coupling matrix $B_d$ is connected, non-bipartite, then
	\begin{align}
	\var(\FPPest) &= \frac{ f^TD_o^{1/2}}{bM}\bigg(		D_o^{-1/2}A	D_i^{-1}A^{T}D_o^{-1/2}	 - \frac{D_o^{1/2}\mathds{1}\mathds{1}^{T}D_o^{1/2}}{M}	\bigg)D_o^{1/2}f\\
	&\leq \frac{1}{bM} \lambda_2 || D_o^{1/2}f||^2
	\end{align} where, $M = \sum_{ v \in V}\id(v)$, $\lambda_2$ is the second largest eigenvalue of $B_d$, $f$ is the $N\times1$ dimensional vector of binary attributes.
\end{theorem*}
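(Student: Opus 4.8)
The plan is to reduce the whole statement to a single-sample variance and then to a spectral bound on one quadratic form. Since Algorithm~1 draws the $b$ followers in $S$ independently from $p_v=\id(v)/M$, the summands $q_f(v)$, $v\in S$, are i.i.d.\ copies of $q_f(Z)$ for a random follower $Z$, so $\var(\FPPest)=\tfrac1b\var(q_f(Z))$ and it suffices to compute and bound $\var(q_f(Z))=\mathbb{E}\{q_f(Z)^2\}-(\mathbb{E}\{q_f(Z)\})^2$.

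First I would compute the two moments using the matrix form $q_f(v)=e_v^TD_i^{-1}A^Tf$ established in the proof of Theorem~3. For the second moment, $\mathbb{E}\{q_f(Z)^2\}=\sum_v\tfrac{\id(v)}{M}\big(e_v^TD_i^{-1}A^Tf\big)^2=\tfrac1M f^TAD_i^{-1}A^Tf$, where the last step uses $\sum_v \id(v)\,\big((D_i^{-1}A^Tf)_v\big)^2=(A^Tf)^TD_i^{-1}(A^Tf)$. For the first-moment term, the proof of Theorem~3 already gives $\mathbb{E}\{q_f(Z)\}=\tfrac1M\mathds{1}^TA^Tf$, and since the row and column sums of $A$ are in- and out-degrees, i.e.\ $A^T\mathds{1}=D_i\mathds{1}$ and $A\mathds{1}=D_o\mathds{1}$, this equals $\tfrac1M\mathds{1}^TD_of$, so $(\mathbb{E}\{q_f(Z)\})^2=\tfrac1{M^2}f^TD_o\mathds{1}\mathds{1}^TD_of$. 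Factoring $D_o^{1/2}$ out of both terms and writing $B_d=D_o^{-1/2}AD_i^{-1}A^TD_o^{-1/2}$ produces exactly the claimed identity $\var(\FPPest)=\tfrac1{bM}\,f^TD_o^{1/2}\big(B_d-D_o^{1/2}\mathds{1}\mathds{1}^TD_o^{1/2}/M\big)D_o^{1/2}f$.

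Next I would identify the subtracted rank-one matrix as the Perron spectral projector of $B_d$. Set $g=D_o^{1/2}f$ and $\pi=D_o^{1/2}\mathds{1}$. Using $A^T\mathds{1}=D_i\mathds{1}$ and $A\mathds{1}=D_o\mathds{1}$ one checks $B_d\pi=\pi$, so $\pi$ is an eigenvector of $B_d$ with eigenvalue $1$; moreover $\|\pi\|^2=\mathds{1}^TD_o\mathds{1}=\sum_v\od(v)=\sum_v\id(v)=M$, so $D_o^{1/2}\mathds{1}\mathds{1}^TD_o^{1/2}/M=\pi\pi^T/\|\pi\|^2$ is precisely the orthogonal projection onto $\mathrm{span}(\pi)$. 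To see that $1$ is the \emph{largest} eigenvalue, note $B_d=CC^T$ with $C=D_o^{-1/2}AD_i^{-1/2}$, hence $B_d\succeq 0$; and $B_d$ is similar (via conjugation by $D_o^{-1/2}$) to the row-stochastic matrix $P=D_o^{-1}AD_i^{-1}A^T$, whose spectrum lies in $[-1,1]$. Thus the eigenvalues of $B_d$ lie in $[0,1]$, and connectedness and non-bipartiteness of $B_d$ make the eigenvalue $1$ simple and strictly dominant, so its second largest eigenvalue satisfies $\lambda_2\in[0,1)$.

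Finally I would decompose $g=g_\parallel+g_\perp$ with $g_\parallel=(\pi\pi^T/\|\pi\|^2)g\in\mathrm{span}(\pi)$. Since $B_d$ is symmetric and $B_d g_\parallel=g_\parallel$, the cross terms cancel and $g^T\big(B_d-\pi\pi^T/\|\pi\|^2\big)g=g_\perp^TB_dg_\perp$. As $g_\perp$ is orthogonal to the top eigenvector and $B_d\succeq 0$, the Courant--Fischer characterization gives $g_\perp^TB_dg_\perp\le\lambda_2\|g_\perp\|^2\le\lambda_2\|g\|^2=\lambda_2\|D_o^{1/2}f\|^2$; dividing by $bM$ yields the bound. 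The step that needs the most care is the middle one: checking that the subtracted matrix is exactly the Perron projector (not merely proportional to it) and that $1$ is the maximal eigenvalue of $B_d$ — this is where the factorization $B_d=CC^T$, the similarity to the stochastic matrix $P$, and the connectivity/non-bipartiteness hypotheses all enter; the remaining manipulations are routine bookkeeping with the diagonal degree matrices.
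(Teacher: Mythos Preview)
Your proof is correct and tracks the paper's argument closely through the exact variance identity: both reduce to $\tfrac1b\var(q_f(Z))$, compute the two moments via $q_f(v)=e_v^TD_i^{-1}A^Tf$, and factor out $D_o^{1/2}$ to expose $B_d$. The only substantive difference is in the final spectral step. The paper bounds the quadratic form by the Cauchy--Schwarz inequality, obtaining $g^T(B_d-\pi\pi^T/\|\pi\|^2)g\le \|g\|^2\,\|B_d-\pi\pi^T/\|\pi\|^2\|$, and then asserts that this spectral norm equals $\lambda_2$ because $\pi/\sqrt{M}$ is a unit eigenvector of $B_d$ with eigenvalue $1$ and $B_d\succeq 0$. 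You instead decompose $g=g_\parallel+g_\perp$ orthogonally along $\pi$ and invoke Courant--Fischer on $g_\perp$. Your route is a bit more explicit and, in particular, you supply an argument the paper leaves tacit: that $1$ is actually the \emph{top} eigenvalue of $B_d$, via the similarity $D_o^{-1/2}B_dD_o^{1/2}=P$ with $P$ row-stochastic together with $B_d=CC^T\succeq 0$. Both approaches are standard and yield the same bound; yours just fills in the spectral justification more carefully.
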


\begin{proof}
Since $\FPPest$ is the average of the perceptions of $b$ independently sampled random followers,
\begin{align}
\var(\FPPest) &= \frac{1}{b}\var(q_f(Z)) =\frac{1}{b}\Big(\mathbb{E}\{q^2_f(Z)\} - \mathbb{E}\{q_f(Z)\}^2\Big) \label{eq:var_qf_z}
\end{align}
where, $Z$ is a random follower.
Consider $\mathbb{E}\{q^2_f(Z)\}$.
\begin{align}
\mathbb{E}\{q^2_f(Z)\} &= \sum_{ v \in V}\frac{\id(v)}{M}q^2_f(v) = \sum_{ v \in V} \frac{\id(v)}{M}f^TAD_{i}^{-1}e_ve_v^TD_i^{-1}A^Tf\\
&\hspace{0.5cm}\text{(by substituting for $q_f(v)$ from (\ref{eq:q_f_v}))} \nonumber\\
&=\frac{1}{M}\bigg(	f^TAD_i^{-1}\Big(\sum_{ v \in V}\id(v)e_ve_v^T\Big)D_i^{-1}A^Tf\		\bigg)\\
&=\frac{1}{M}f^TAD_{i}^{-1}A^Tf \label{eq:exp_qf_squared_z}
\end{align}
Hence,
\begin{align}
\var(q_f(Z)) &= \mathbb{E}\{q^2_f(Z)\} - \mathbb{E}\{q_f(Z)\}^2\\
&= \frac{1}{M}f^TAD_{i}^{-1}A^Tf - \frac{1}{M^2}f^TA\mathds{1}\mathds{1}^TA^Tf\\
&\hspace{0.5cm}\text{(by substituting from (\ref{eq:exp_q_Z}) and (\ref{eq:exp_qf_squared_z}))}\nonumber\\
&=\frac{1}{M}f^T\bigg(	AD_{i}^{-1}A^T - 	\frac{1}{M}{A\mathds{1}\mathds{1}^TA^T}	\bigg)f\\
&= \frac{ f^TD_o^{1/2}}{M}\bigg(		D_o^{-1/2}A	D_i^{-1}A^{T}D_o^{-1/2}	 - \frac{D_o^{1/2}\mathds{1}\mathds{1}^{T}D_o^{1/2}}{M}	\bigg)D_o^{1/2}f\\
&\leq\frac{|| D_o^{1/2}f\vert\vert^2}{M}	\bigg|\bigg| 	D_o^{-1/2}A	D_i^{-1}A^{T}D_o^{-1/2}	 - \frac{D_o^{1/2}\mathds{1}\mathds{1}^{T}D_o^{1/2}}{M}	\bigg|\bigg| \label{eq:cauchy_schwarz_spectral_norm}
\end{align}
where, for a matrix $A$, $||A||$ denotes the spectral norm (largest singular value) and (\ref{eq:cauchy_schwarz_spectral_norm}) is obtained by applying the Cauchy-Schwarz inequality.

Note that the degree-discounted bibliographic coupling-matrix, 
$$B_d = D_o^{-1/2}A	D_i^{-1}A^{T}D_o^{-1/2} = \big(D_o^{-1/2}A	D_i^{-1/2}\big){\big(D_o^{-1/2}A	D_i^{-1/2}\big)}^T$$ 
is a symmetric, positive semi-definite matrix. Hence, all eigenvalues are non-negative. Further, $\frac{D_o^{1/2}\mathds{1}}{\sqrt{M}}$ is the eigenvector with all non-negative elements and corresponds to eigenvalue $1$. Hence,
$$	\big|\big| 	D_o^{-1/2}A	D_i^{-1}A^{T}D_o^{-1/2}	 - \frac{D_o^{1/2}\mathds{1}\mathds{1}^{T}D_o^{1/2}}{M}	\big|\big| = \lambda_2$$ 
where, $\lambda_2$ is the second largest eigenvalue of
$B_d$. Then, the result follows from (Equation \ref{eq:var_qf_z}).
\end{proof}
\newpage
\section{Appendix Figures}
\begin{figure}[hb!]
    \includegraphics[width=0.84\columnwidth]{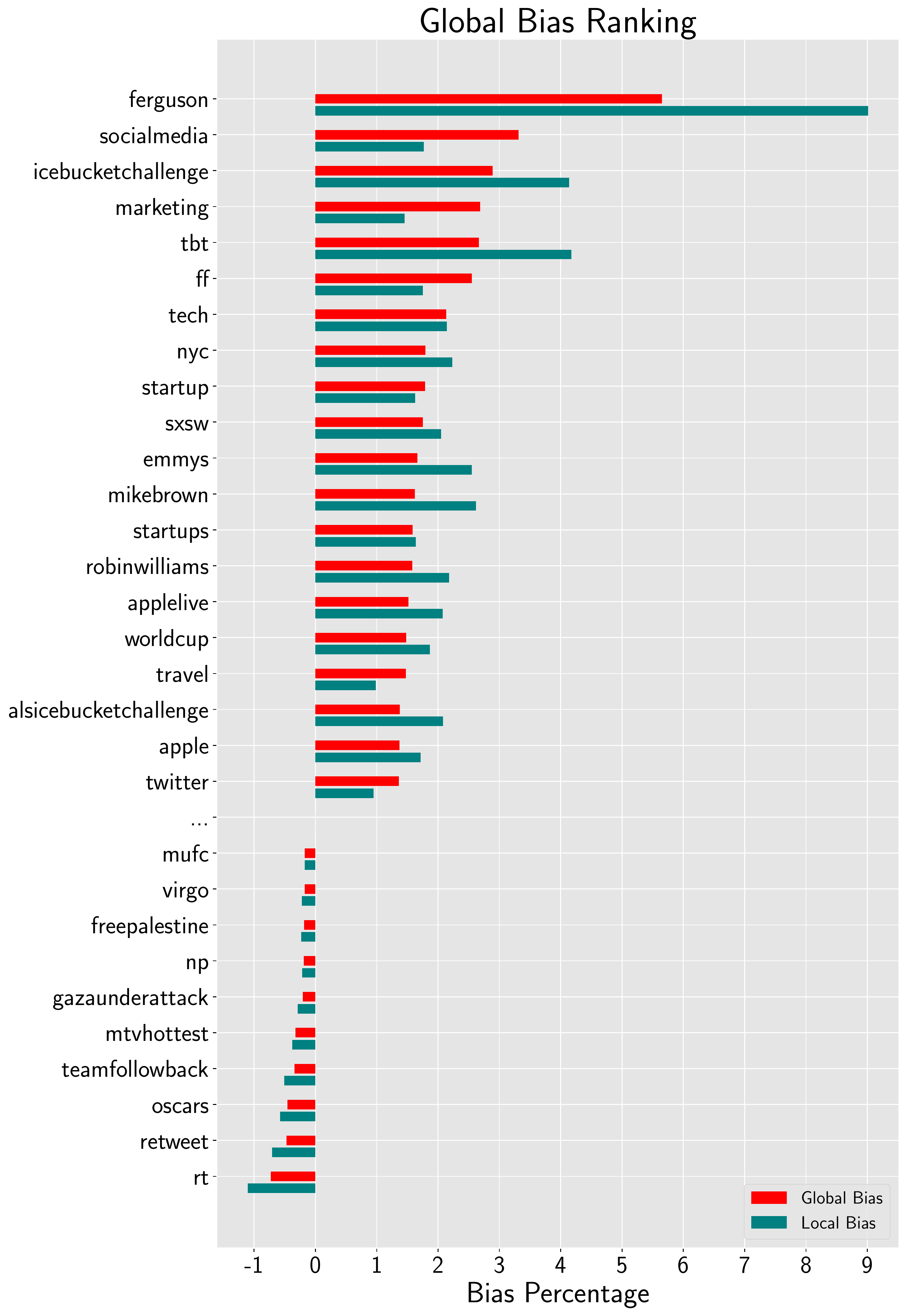}
    \caption{The ranking of popular Twitter hashtags based on \textit{Global bias}. Top-$20$ and bottom-$10$ are included in the ranking. The bars compare Global bias ($\Bglobal$) and Local Bias ($\Blocal$). There are $94$ hashtags among $1153$ with opposite sign of local bias and global bias, although both bias values for these hashtags are close to zero.
    Among the remaining hashtags, $661$ ($62\%$) have larger local bias than global bias, and $398$ ($38\%$) have larger global bias than local bias. \label{fig:global_bias_ranking}}
\end{figure}
\newpage
\bibliographystyle{unsrt}
\bibliography{references}

\end{document}